 \font\smallit=cmti10
 \font\smalltt=cmtt10
 \font\smallrm=cmr9 
\theoremstyle{plain} 
\newtheorem{theorem}{Theorem}
\newtheorem{corollary}{Corollary}
\newtheorem{lemma}{Lemma}
\newtheorem{proposition}{Proposition}
\theoremstyle{definition}
\newtheorem{example}{Example}
\newtheorem{definition}{Definition}
\newtheorem{Alg}{Algorithm}
\newtheorem{remark}{Remark}
\def\val#1{\mathrm{val}_{#1}}
\def\rep#1{\mathrm{rep}_{#1}}
\begin{document}

\begin{center}
 {\bf ABSTRACT NUMERATION SYSTEMS ON BOUNDED LANGUAGES AND MULTIPLICATION BY A CONSTANT}
 \vskip 20pt
 {\bf Emilie Charlier}\\
 {\smallit Institute of Mathematics, University of Li\`ege, Grande Traverse 12 (B 37),
  B--4000 Li\`ege, Belgium}\\
 {\tt echarlier@ulg.ac.be}\\ 
 \vskip 10pt
 {\bf Michel Rigo}\footnote{The first two authors were supported by an AutoMathA short visit grant (European Science Fundation).}\\
 {\smallit Institute of Mathematics, University of Li\`ege, Grande Traverse 12 (B 37),
  B--4000 Li\`ege, Belgium}\\
 {\tt M.Rigo@ulg.ac.be}\\ 
 \vskip 10pt
 {\bf Wolfgang Steiner\footnote{This author was supported by the French Agence Nationale de la Recherche, grant ANR--06--JCJC--0073.}}\\
 {\smallit LIAFA, CNRS, Universit\'e Paris Diderot -- Paris 7, case 7014, 75205 Paris Cedex 13, France}\\
 {\tt steiner@liafa.jussieu.fr}\\ 
 \end{center}
 \vskip 30pt
 \centerline{\smallit Received: , Accepted: , Published: } 
 \vskip 30pt 

\centerline{\bf Abstract}
 A set of integers is $S$-recognizable in an abstract numeration 
    system $S$ if the language made up of the representations of its 
    elements is accepted by a finite automaton.
    For abstract numeration systems built over bounded languages with 
    at least three letters, we show that multiplication by an integer 
    $\lambda\ge2$ does not preserve $S$-recognizability, meaning that
    there always exists a $S$-recognizable set $X$ such that 
    $\lambda X$ is not $S$-recognizable. 
    The main tool is a bijection between the representation of an 
    integer over a bounded language and its decomposition as a sum of 
    binomial coefficients with certain properties, the so-called 
    combinatorial numeration system. 

\noindent

\pagestyle{myheadings}
 \markright{\smalltt INTEGERS: \smallrm ELECTRONIC JOURNAL OF COMBINATORIAL NUMBER THEORY \smalltt x (200x), \#Axx\hfill} 

 \thispagestyle{empty} 
 \baselineskip=15pt 
 \vskip 30pt

\section*{\normalsize 1. Introduction}

An {\em alphabet} is a finite set whose elements are called {\em
  letters}.  For a given alphabet $\Sigma$, a {\em word} of length
$n\ge 0$ over $\Sigma$ is a map $w:\{1,\ldots,n\}\to\Sigma$. The
length of a word $w$ is denoted by $|w|$.  The only word of length $0$
is the {\em empty word} denoted by $\varepsilon$.  The set of all
words over $\Sigma$ is $\Sigma^*$. The {\em concatenation} of the
words $u$ and $v$ respectively of length $m$ and $n$ is the word
$w=uv$ of length $m+n$ where $w(i)=u(i)$ for $1\le i\le m$ and
$w(i)=v(i-m)$ for $m+1\le i\le m+n$. Endowed with the concatenation
product, $\Sigma^*$ is a monoid with $\varepsilon$ as identity
element. For a word $u$ and $j\in\mathbb{N}$, $u^j$ is the
concatenation of $j$ copies of $u$. In particular, we set
$u^0=\varepsilon$. We write
$\Sigma^+=\Sigma^*\setminus\{\varepsilon\}$. A {\em language} over
$\Sigma$ is a subset of $\Sigma^*$. Since we use $|\cdot|$ to
denote the length of a word, we have chosen to denote the cardinality
of the set $A$ by $\#A$ to avoid any misunderstanding.

Denote the {\em bounded language} over the alphabet $\Sigma_\ell=
\{a_1,a_2,\ldots,a_\ell\}$ of size $\ell\ge 1$ by
$$\mathcal{B}_\ell=a_1^*a_2^*\cdots
a_\ell^*:=\{a_1^{j_1}a_2^{j_2}\cdots a_\ell^{j_\ell}\mid
j_1,j_2,\ldots,j_\ell\ge 0\}.$$
We always assume that $(\Sigma_\ell,<)$ is
totally ordered by $a_1<a_2<\cdots <a_\ell$. Let $x,y\in\Sigma_\ell^*$
be two words.  Recall that $x$ is {\em genealogically less than} $y$
either if $|x|<|y|$ or if they have the same length and $x$ is
lexicographically smaller than $y$, i.e., there exist
$p,x',y'\in\Sigma_\ell^*$ such that $x=pa_ix'$, $y=pa_jy'$ and $i<j$.
We can enumerate the words of $\mathcal{B}_\ell$ using the increasing
genealogical ordering (also called radix order or shortlex order) induced by the
ordering $<$ of $\Sigma_\ell$.  For an integer $n\ge 0$, the
$(n+1)$-st word of $\mathcal{B}_\ell$ is said to be the {\em
  $\mathcal{B}_\ell$-representation} of $n$ and is denoted by
$\rep{\ell}(n)$.  The reciprocal map $\rep{\ell}^{-1}=:\val{\ell}$
maps the $n$-th word of $\mathcal{B}_\ell$ onto its {\em numerical
  value} $n-1$. Notice that this map $\val{\ell}$ is a special case of
a diagonal function as considered for instance in \cite{Lew}.  A set
$X\subseteq\mathbb{N}$ is said to be {\em
  $\mathcal{B}_\ell$-recognizable} if $\rep{\ell}(X)$ is a regular
language over the alphabet $\Sigma_\ell$, i.e., accepted by a finite
automaton. This one-to-one correspondence between the words of
$\mathcal{B}_\ell$ and the integers can be extended to any infinite
regular language $L$ over a totally ordered alphabet $(\Sigma,<)$.
This leads to the general notion of abstract numeration system.
\begin{definition}
    An {\em abstract numeration system} is a triple $S=(L,\Sigma,<)$
    where $L$ is an infinite regular language over the totally ordered
    alphabet $(\Sigma,<)$. We denote by $\rep{S}(n)$ the $(n+1)$-st
    word in the genealogically ordered language $L$. A set $X$ of
    integers is {\em $S$-recognizable} if $\rep{S}(X)$ is a regular
    language.
\end{definition}

For an abstract numeration system $S=(L,\Sigma,<)$ where
$L=\mathcal{B}_\ell$ and $\Sigma=\Sigma_\ell$, the map $\rep{S}$ is
exactly $\rep{\ell}$.  Thus $\mathcal{B}_\ell$-recognizability is a
special case of $S$-recognizability.

Note that the language $\mathcal{B}_\ell$ is recognized by the following
automaton: the set of states is $\{q_1,\ldots,q_\ell\}$, each state is
final, $q_1$ is initial, and for $1\le i\le j\le n$ we have a
transition $q_i\xrightarrow{a_j}q_j$.
The case $\ell=4$ is depicted in Figure~\ref{fig:aut}.

\begin{figure}[htbp]
\centering\includegraphics{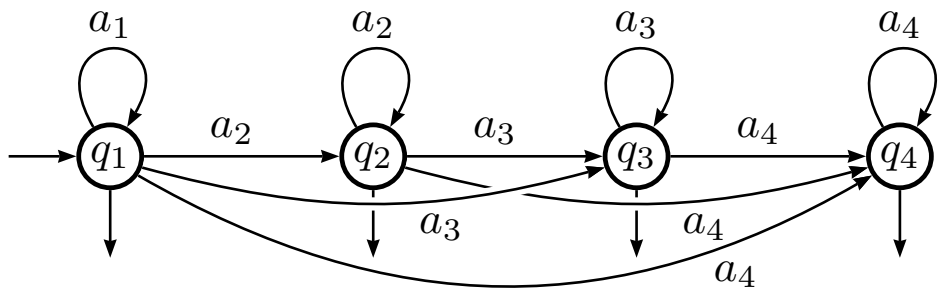}
\caption{Automaton recognizing $\mathcal{B}_4$.}
\label{fig:aut}
\end{figure}

\begin{example}
     Let $\Sigma_2=\{a,b\}$ with $a<b$.  The first words of
  $\mathcal{B}_2=a^*b^*$ enumerated by genealogical order are
  $$\varepsilon,a,b,aa,ab,bb,aaa,aab,abb,bbb,aaaa,\ldots$$
  For instance,
  $\rep{2}(5)=bb$ and $\val{2}(a^*)=\{0,1,3,6,10,\ldots\}$ is a
  $\mathcal{B}_2$-recognizable subset of $\mathbb{N}$ (formed of all
  triangular numbers).
\end{example}

For details on bounded languages, see for instance~\cite{GS} and for a
reference on automata and formal languages theory, see~\cite{Ei}. 

In the framework of positional numeration systems, recognizable sets
of integers have been extensively studied since the seminal work of
A.~Cobham in the late sixties (see for instance \cite[Chap.  V]{Ei}).
Since then, the notion of recognizability has been studied from various
points of view (logical characterization, automatic sequences,
\ldots). In particular, recognizability for generalized number systems
like the Fibonacci system has been considered \cite{BH,Sh}. Here we
shall consider recognizable sets of integers in the general setting of
{\em abstract numeration systems}. It is well-known that the class of
regular languages $L$ splits into two parts with respect to the
behavior of the function $n\mapsto\#(L\cap \Sigma^n)$ \cite{Sz}. This
latter function is either bounded from above by $n^k$ for some $k$ or,
infinitely often bounded from below by $\theta^n$ for some $\theta>1$.
In these cases, we speak respectively of {\em polynomial} and {\em
  exponential languages}.

Notice that usual positional numeration systems like integer base
systems or the Fibonacci system are special cases of abstract 
numeration systems built on an exponential language. 
On the other hand, bounded languages are polynomial and this leads to 
new phenomena.

The question addressed in the present paper deals with the
preservation of the recognizability with respect to the operation of
multiplication by a constant. {\em Let $S=(L,\Sigma,<)$ be an abstract
  numeration system, $X$ be a $S$-recognizable set of integers and
  $\lambda$ be a positive integer. What can be said about the
  $S$-recognizability of $\lambda X$ ?} This question is a first step
before handling more complex operations such as addition of two
arbitrary recognizable sets.

This question is rather difficult. For exponential languages, partial
answers are known (see for instance \cite{BH}). The case of polynomial
languages has not been considered yet (except for $a^*b^*$ in
\cite{LR}). Bounded languages are good candidates to start with.
Indeed, an arbitrary polynomial language is a finite union of
languages of the form $u_1v_1^*u_2v_2^*\cdots v_k^*u_{k+1}$ where the
$u_i$'s and $v_i$'s are words \cite{Sz}, and the automata accepting these
languages share the same properties as those accepting bounded
languages. Therefore we hope that our results give the flavor of what
could be expected for any polynomial languages.

Since $\rep{\ell}$ is a one-to-one correspondence between $\mathbb{N}$
and $\mathcal{B}_\ell$, the multiplication by a constant
$\lambda\in\mathbb{N}$ can be viewed as a transformation
$f_\lambda:\mathcal{B}_\ell\to\mathcal{B}_\ell$ acting on the language
$\mathcal{B}_\ell$, the question being then to {\em study the
preservation of the regularity of the subsets of $\mathcal{B}_\ell$
under this transformation}. 

\begin{example}
Let $\ell=2$, $\Sigma_2=\{a,b\}$ and $\lambda=25$. 
We have the following diagram.
$$ 
\begin{array}{rcl}
8 & \xrightarrow{\times 25}& 200\\
\rep{2}\downarrow & &\downarrow\rep{2} \\
a\, b^{2} & \xrightarrow{\times 25} & a^9\, b^{10}\\
\end{array}
\quad\quad
\begin{array}{rcl}
\mathbb{N} & \xrightarrow{\times \lambda}& \mathbb{N}\\
\rep{2}\downarrow & &\downarrow\rep{2} \\
\mathcal{B}_\ell & \xrightarrow{f_\lambda} & \mathcal{B}_\ell \\
\end{array}$$
Thus the multiplication by $\lambda=25$ induces a mapping $f_\lambda$ onto
$\mathcal{B}_2$ such that for $w,w'\in\mathcal{B}_2$, $f_\lambda(w)=w'$
if and only if $\val{2}(w')=25\, \val{2}(w)$.
\end{example}

This paper is organized as follows. In Section~2, we recall
a few results related to our main question. In particular, we
characterize the recognizable sets of integers for abstract numeration
systems whose language is slender, i.e., has at most $d$ words of each
length for some constant $d$. We easily get that in this situation, the
multiplication by a constant always preserves recognizability.

In Section~3, we
compute $\val{\ell}(a_1^{n_1}\cdots a_\ell^{n_\ell})$ and derive an
easy bijective proof of the fact that any nonnegative integer can be
written in a unique way as
$$n=\binom{z_\ell}{\ell}+\binom{z_{\ell-1}}{\ell-1}+\cdots+\binom{z_1}{1}$$
with $z_\ell>z_{\ell-1}>\cdots >z_1\ge 0$.  Fraenkel~\cite{Fra} called
this system {\em combinatorial numeration system} and referred to
Lehmer~\cite{Leh}.  Even if this seems to be a folklore result, the
only proof that we were able to trace out goes back to
Katona~\cite{Kat} who developed different arguments to obtain the same
decomposition.

In Section~4, we make explicit the regular subsets of
$\mathcal{B}_\ell$ in terms of semi-linear sets of $\mathbb{N}^\ell$
and give an application to the $\mathcal{B}_\ell$-recognizability of
arithmetic progressions. 

In Section~5, we answer our main question about bounded
languages and recognizability after multiplication by a constant. We
get a formula which can be used to obtain estimates on the
$\mathcal{B}_\ell$-representation of $\lambda n$ from the one of $n$.
Therefore, thanks to a counting argument and to the results from
Section~4, we show that for any constant $\lambda$, there
exists a $\mathcal{B}_\ell$-recognizable set $X$ such that $\lambda X$
is no more $\mathcal{B}_\ell$-recognizable, with $\ell\ge 3$.
Consequently, our main result can be summarized as follows. Let
$\ell,\,\lambda$ be positive integers. For the abstract
numeration system $S=(a_1^*\cdots a_\ell^*,\{a_1<\cdots<a_\ell\})$,
multiplication by $\lambda\ge2$ preserves $S$-recognizability if and
only if either $\ell=1$ or $\ell=2$ and $\lambda$ is an odd square.
    
    We put in the last section some structural results concerning the
    effect of multiplication by a constant in the abstract numeration
    system built on $\mathcal{B}_\ell$.

\vskip 30pt

\section*{\normalsize 2. First results about $S$-recognizability}

In this section we collect a few results directly connected with our
problem.

\begin{theorem}\cite{LR}\label{the:pa}
    Let $S=(L,\Sigma,<)$ be an abstract numeration system. Any
    arithmetic progression is $S$-recognizable.
\end{theorem}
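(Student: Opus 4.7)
The plan is to reduce the arithmetic progression $\{a + kd : k \ge 0\}$ to a set that a finite automaton can recognize by tracking $\val{S}(w) \bmod d$. First, I would observe that $\{a + kd : k \ge 0\}$ differs from the set $Y_{a,d} = \{n \in \mathbb{N} : n \equiv a \pmod{d}\}$ by only finitely many integers (the elements of $Y_{a,d}$ smaller than $a$). Since $S$-recognizability is preserved under finite symmetric difference, it suffices to prove that $Y_{a,d}$ is $S$-recognizable for every $a$ and $d$.

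Fix a DFA $\mathcal{A} = (Q, \Sigma, \delta, q_0, F)$ recognizing $L$, and for $q \in Q$ and $k \ge 0$ let $u_q(k)$ denote the number of words of length $k$ accepted from $q$. For $w = w_1 \cdots w_n \in L$ and $q_i = \delta(q_0, w_1 \cdots w_i)$, decomposing by length and, within a length, by lexicographic rank gives
\[
\val{S}(w) = \sum_{m=0}^{n-1} \#(L \cap \Sigma^m) + \sum_{i=1}^{n} \sum_{\substack{b \in \Sigma\\ b < w_i}} u_{\delta(q_{i-1}, b)}(n-i).
\]

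The central ingredient is that each $u_q(\cdot)$ satisfies a linear recurrence governed by the transition matrix of $\mathcal{A}$; reducing modulo $d$ and invoking the finiteness of matrices over $\mathbb{Z}/d\mathbb{Z}$, the sequences $(u_q(k) \bmod d)_{k \ge 0}$ are eventually periodic, say with preperiod $N$ and common period $T$. The same holds for $\sum_{m < n} \#(L \cap \Sigma^m) \bmod d$ viewed as a function of $n$. Hence, once the length is large enough, all the quantities above reduce modulo $d$ to functions of finitely many residues.

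With this in hand I would build a nondeterministic finite automaton $\mathcal{B}$ that, at the outset, guesses the residue $r$ of the still-unknown length $n$ modulo $T$, and then, while simulating $\mathcal{A}$, maintains the tuple $(q_i,\, i \bmod T,\, r,\, v_i)$, where $v_i$ is the partial value accumulated modulo $d$; knowing $i \bmod T$ and $r$ determines $(n-i) \bmod T$ and hence each contribution $u_{\delta(q_{i-1},b)}(n-i) \bmod d$ by eventual periodicity. At the end of the input, $\mathcal{B}$ verifies that the guess $r$ is correct, that $q_n \in F$, and that $v_n \equiv a \pmod{d}$. The main obstacle is exactly this tension between left-to-right processing and the length-dependence of each term, which the initial nondeterministic guess resolves; determinizing $\mathcal{B}$ and adjoining or removing the finite set of words of length less than $N$ to correct for the preperiodic regime then yields a DFA recognizing $\rep{S}(Y_{a,d})$.
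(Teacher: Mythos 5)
Your overall strategy is the right one and is essentially the argument of Lecomte and Rigo in \cite{LR}: the rank formula $\val{S}(w)=\sum_{m<n}\#(L\cap\Sigma^m)+\sum_{i=1}^n\sum_{b<w_i}u_{\delta(q_{i-1},b)}(n-i)$, the eventual periodicity of $(u_q(k)\bmod d)_{k\ge0}$ obtained from the transition matrix over $\mathbb{Z}/d\mathbb{Z}$, and a nondeterministic automaton that guesses $n\bmod T$ at the start and checks the guess when the input ends. Note that the present paper does not reprove the general statement at all --- it cites \cite{LR} --- and its own in-text argument (Proposition~\ref{pro:PA}) covers only the bounded-language case, using the closed formula \eqref{eq:vall} and the periodicity of $\binom{n}{k}\bmod p$; so you are reconstructing the external proof, not the paper's.

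There is, however, a genuine gap in your handling of the preperiodic regime. Eventual periodicity gives $u_q(k)\equiv u_q(k+T)\pmod d$ only for $k\ge N$, and the argument fed to $u_{\delta(q_{i-1},b)}$ in the rank formula is $n-i$, which is smaller than $N$ exactly at the \emph{last} $N$ positions of the word --- for every word, however long. So the periodic lookup via $(n-i)\bmod T$ is potentially wrong on the tail of every sufficiently long input, not merely on the finitely many words of length $<N$, and your proposed correction by a finite symmetric difference cannot repair this. (The sequences $(u_q(k)\bmod d)_k$ need not be purely periodic, since the transition matrix need not be invertible modulo $d$; a nontrivial preperiod really occurs.) The standard fix is a second nondeterministic guess: the automaton guesses the position from which fewer than $N$ letters remain, thereafter counts the remaining length exactly and uses the true values $u_q(k)\bmod d$ for $k<N$, and rejects at the end of the input if either guess was inconsistent with the actual length; equivalently, one buffers the last $N$ letters together with the states visited there and adds their exact contributions only once the input ends. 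With that amendment your construction goes through.
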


Let us denote by $\mathbf{u}_L(n)$ (resp. $\mathbf{v}_L(n)$) the number of words of length $n$ (resp. at most $n$) 
belonging to $L$. The following result states that only some constants 
$\lambda$ are good candidates for multiplication within
$\mathcal{B}_\ell$.

\begin{theorem}\cite{Ri}\label{the:6}
    Let $L\subseteq\Sigma^*$ be a regular language such that
    $\mathbf{u}_L(n)=\Theta(n^k)$ for some $k\in\mathbb{N}$ and
    $S=(L,\Sigma,<)$. Preservation of $S$-recognizability after
    multiplication by $\lambda$ holds only if $\lambda=\beta^{k+1}$
    for some $\beta\in\mathbb{N}$.
\end{theorem}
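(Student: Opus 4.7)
The idea is to exhibit a single $S$-recognizable witness $X$ and derive a constraint on $\lambda$ from a growth count of $\rep{S}(\lambda X)$. A convenient choice is
$$X=\{\val{S}(w):w\text{ is the genealogically first word of its length in }L\}.$$
A standard automaton construction shows that the set of lex-minimal accepted words of each length of a regular language is again regular, so $\rep{S}(X)$ is regular and hence $X$ is $S$-recognizable. Since $\mathbf{u}_L(n)=\Theta(n^k)$ we have $\mathbf{u}_L(n)>0$ for all sufficiently large $n$, so $\rep{S}(X)$ contains exactly one word of each sufficiently large length, giving $\mathbf{v}_{\rep{S}(X)}(n)=n+O(1)$.

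The next step is a clean asymptotic $\mathbf{v}_L(n)=Cn^{k+1}(1+o(1))$ for some constant $C>0$. Rationality of the generating function of $L$ together with $\mathbf{u}_L(n)=\Theta(n^k)$ puts all of its poles on the unit circle; the pole at $z=1$ contributes a polynomial of degree $k$ to $\mathbf{u}_L(n)$, whose partial sum gives the main term, while contributions from other roots of unity are oscillatory and have partial sums of order $O(n^k)$ by Abel summation. Combining this with the elementary identity $|\lambda X\cap[0,M]|=|X\cap[0,\lfloor M/\lambda\rfloor]|$ taken at $M=\mathbf{v}_L(n)-1$, the $n+O(1)$ estimate for $X$, and inversion of the asymptotic for $\mathbf{v}_L$ yields
$$\mathbf{v}_{\rep{S}(\lambda X)}(n)=\frac{n}{\lambda^{1/(k+1)}}\,(1+o(1)).$$

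Now suppose $\lambda X$ is $S$-recognizable, so that $\rep{S}(\lambda X)$ is a regular language with $\mathbf{v}(n)=\Theta(n)$. For a regular language the counting sequence $\mathbf{u}(n)$ is either exponential or a quasi-polynomial in $n$; since $\mathbf{v}(n)=\Theta(n)$ rules out both exponential growth and polynomial growth of degree $\ge 1$, $\mathbf{u}_{\rep{S}(\lambda X)}$ must be eventually periodic and bounded, so $\rep{S}(\lambda X)$ is slender and $\mathbf{v}_{\rep{S}(\lambda X)}(n)/n$ tends to a rational limit (the average over one period). Matching the two expressions forces $1/\lambda^{1/(k+1)}\in\mathbb{Q}$, hence $\lambda^{1/(k+1)}\in\mathbb{Q}$; since a rational $(k+1)$-th root of a positive integer must itself be an integer, we conclude $\lambda=\beta^{k+1}$ for some $\beta\in\mathbb{N}$. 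The principal technical obstacle is justifying the clean $Cn^{k+1}$ asymptotic for $\mathbf{v}_L$ (controlling the oscillatory quasi-polynomial contributions from roots of unity $\neq 1$); the remainder is bookkeeping combined with classical structural facts about regular languages of polynomial growth.
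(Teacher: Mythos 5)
The paper does not prove this theorem; it is quoted from \cite{Ri} without proof, so there is no internal argument to compare against. Your proposal is the natural density-counting argument (and, as far as I can tell, the same in spirit as the cited source), and its overall structure is sound: the language of length-lexicographically minimal words of $L$ is indeed regular, the identity $\#(\lambda X\cap[0,M])=\#(X\cap[0,\lfloor M/\lambda\rfloor])$ applied at $M=\mathbf{v}_L(n)-1$ is correct, and the final step (a rational $(k+1)$-th root of an integer is an integer) closes the argument.

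Two points need shoring up. First, your assertion that the pole at $z=1$ contributes a polynomial of degree exactly $k$ to $\mathbf{u}_L(n)$ is stated without justification; a priori the degree-$k$ part of the quasi-polynomial could be carried entirely by roots of unity $\neq 1$. It cannot, but the reason is the nonnegativity of $\mathbf{u}_L$: a nonzero periodic combination $\sum_j c_j\omega_j^n$ over roots of unity $\omega_j\neq 1$ has mean zero over a period, hence takes a negative value on some residue class, which would force $\mathbf{u}_L(n)<0$ along that class for large $n$. This one-line argument is what guarantees $\mathbf{v}_L(n)=Cn^{k+1}(1+o(1))$ with $C>0$. Second, note that the paper's definition of $\Theta$ only requires the lower bound $\mathbf{u}_L(n)\ge Dn^k$ along an infinite subsequence, so you may \emph{not} conclude that $\mathbf{u}_L(n)>0$ for all large $n$ (e.g.\ $L=(aa)^*$ satisfies $\mathbf{u}_L=\Theta(1)$). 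Consequently $\mathbf{v}_{\rep{S}(X)}(n)$ is not $n+O(1)$ but $\delta n+O(1)$, where $\delta\in\mathbb{Q}\cap(0,1]$ is the density of the (eventually periodic) set of lengths realized by $L$. This propagates harmlessly: the limit you compare against becomes $\delta\lambda^{-1/(k+1)}$, and since both $\delta$ and the slender-language density $\alpha$ are rational and positive, you still get $\lambda^{1/(k+1)}\in\mathbb{Q}$. With these two repairs the proof is complete.
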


We write $f=\Theta(g)$ if there exist $N$ and $C>0$ such that for 
all $n\ge N$, $f(n)\le C\, g(n)$ (i.e., $f=\mathcal{O}(g)$) and also if 
there exist $D>0$ and an infinite sequence $(n_i)_{i\in\mathbb{N}}$ 
such that $f(n_i)\ge D\, g(n_i)$ for all $i\ge 0$.

As we shall see in the next section that
$\mathbf{u}_{\mathcal{B}_\ell}(n)=\Theta(n^{\ell-1})$, we have to
focus only on multipliers of the form $\beta^\ell$. 
The particular case of $\mathbf{u}_L(n)=\mathcal{O}(1)$ (i.e., $L$ is 
{\em slender}) is interesting in itself and is settled as follows. 
Let us first recall the definition from~\cite{ADPS} and the 
characterization from~\cite{PS,Sh} of such languages.
  
\begin{definition}
    The language $L$ is said to be {\em $d$-slender} if
    $\mathbf{u}_L(n)\le d$ for all $n\ge 0$. 
    The language $L$ is said to be {\em slender} if it is $d$-slender 
    for some $d>0$. 
\end{definition}

A regular language $L$ is slender if and only if it is a {\em union of 
single loops}, i.e., if for some $k\ge 1$ and words $x_i$, $y_i$, 
$z_i$, $1\le i\le k$,
$$
L={\displaystyle \bigcup_{i=1}^k}\, x_i\, y_i^* z_i.
$$
Moreover, we can assume that the sets $x_i\, y_i^* z_i$ are
pairwise disjoint. Notice that the regular expression $x_i\, y_i^*
z_i$ is a shorthand to denote the language $\{x_i y_i^n z_i\mid n\ge
0\}$, again $x_i y_i^n z_i$ has to be understood as the concatenation
of $x_i$, $n$ copies of $y_i$ and then followed by $z_i$.

\begin{theorem}\label{the:slender}
    Let $L\subseteq \Sigma^*$ be a slender regular language and
    $S=(L,\Sigma,<)$. A~set $X\subseteq \mathbb{N}$ is $S$-recognizable
    if and only if $X$ is a finite union of arithmetic progressions.
\end{theorem}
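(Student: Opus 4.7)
The plan is to prove both implications, the non-trivial direction being the forward one. For the reverse implication, that finite unions of arithmetic progressions are $S$-recognizable, one simply invokes Theorem~\ref{the:pa} together with the fact that the class of $S$-recognizable sets is closed under finite unions (inherited from the closure properties of regular languages).

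For the forward direction, I would write $L$ as a disjoint union of single loops $L = \bigsqcup_{i=1}^{k} x_i y_i^* z_i$, as permitted by the characterisation recalled above. If $X$ is $S$-recognizable, then $\rep{S}(X)$ is a regular subset of $L$, and intersecting yields the decomposition $\rep{S}(X) = \bigsqcup_i R_i$ with $R_i = \rep{S}(X) \cap x_i y_i^* z_i$ regular. Since $X = \bigcup_i \val{S}(R_i)$, it suffices to show that each $\val{S}(R_i)$ is a finite union of arithmetic progressions. First I would verify that each $R_i$ is in bijection, via $m \mapsto x_i y_i^m z_i$, with a finite union of arithmetic progressions $X_i \subseteq \mathbb{N}$. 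Since $R_i$ is itself slender regular, it is a finite union of single loops $x'(y')^*z'$, and a short combinatorial argument shows that if $x'(y')^*z' \subseteq x_i y_i^* z_i$ (with $|y_i|>0$), then $|y'|$ is an integer multiple $a|y_i|$ and there exists $m_0$ with $x'(y')^r z' = x_i y_i^{m_0+ar} z_i$ for every $r \ge 0$; the corresponding exponents then form an arithmetic progression.

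What remains is to analyse the map $\pi_i : m \mapsto \val{S}(x_i y_i^m z_i)$, since $\val{S}(R_i) = \pi_i(X_i)$. By definition of the genealogical order, $\pi_i(m)$ equals the number of words of $L$ strictly shorter than $n(m) := |x_i y_i^m z_i|$, plus the lexicographic rank of $x_i y_i^m z_i$ among the length-$n(m)$ words of $L$. From the disjoint decomposition of $L$, the counting function $\mathbf{u}_L(k)$ is eventually periodic in $k$, so the first summand is the sum of a linear and a periodic function of $m$. For the second summand, at most $d$ words of length $n(m)$ occur in $L$, each of the form $x_j y_j^{m_j}z_j$ with $m_j \to \infty$; the lexicographic comparison of two such words is ultimately determined at a fixed position of the eventually periodic one-sided sequences $x_j y_j^{\omega}$ and $x_{j'} y_{j'}^{\omega}$, hence stabilises. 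Consequently $\pi_i$ is ultimately affine on each residue class modulo some integer, and therefore maps the ultimately periodic set $X_i$ to a finite union of arithmetic progressions.

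The step I expect to be the main obstacle is precisely this stabilisation of the lexicographic rank: making rigorous that, once the loop exponents are large enough, cross-loop comparisons of same-length words in $L$ become independent of the exponents. The earlier bookkeeping on $\mathbf{u}_L$ and on the alignment of a single loop inside another is essentially mechanical once the idea is in place.
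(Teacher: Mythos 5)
Your proposal follows essentially the same route as the paper's proof: decompose the slender language into disjoint single loops, observe that regular subsets correspond to arithmetic progressions of the loop exponents, and use the ultimate periodicity of $\mathbf{u}_L$ together with the stabilisation of the genealogical rank within each length to conclude that $\val{S}$ carries these to finite unions of arithmetic progressions. The one refinement worth noting is that the cross-loop lexicographic comparison need not become constant but only ultimately periodic in the common length (e.g.\ when two loops share the same infinite word $xy^{\omega}$ and the decision is made in the tails $z_j$, $z_{j'}$), which is harmless since you already work modulo residue classes --- and this is precisely the step the paper itself only asserts with a ``roughly speaking''.
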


\begin{proof}
    By the characterization of slender languages, we have
$$L={\displaystyle \bigcup_{i=1}^k}\, x_i\, y_i^* z_i \cup F,\ 
    x_i,z_i \in\Sigma^*, y_i\in\Sigma^+$$
    where the sets $x_i\, y_i^*
    z_i$ are pairwise disjoint and $F$ is a finite set. The sequence
    $(\mathbf{u}_L(n))_{n\in\mathbb{N}}$ is ultimately periodic of
    period $C=\mathrm{lcm}_i |y_i|$. 
    Moreover, for $n$ large enough, if
    $x_i\, y_i^n\, z_i$ is the $m$-th word of length $|x_i\, z_i|+n\,
    |y_i|$ then $x_i\, y_i^{n+C/|y_i|}\, z_i$ is the $m$-th word
    of length $|x_i\, z_i|+n\, |y_i|+C$. Roughly speaking, for
    sufficiently large $n$, the structures of the ordered sets of words 
    of length $n$ and $n+C$ are the same.\\
    The regular subsets of $L$ are of the form
    \begin{equation}\label{eq:slender2}
    \bigcup_{j\in J}\, x_{i_j}\, (y_{i_j}^{\alpha_j})^* z_{i_j} \cup F'
    \end{equation}
where $J$ is a finite set, $i_j\in\{1,\ldots,k\}$, 
$\alpha_j\in \mathbb{N}$ and $F'$ is a finite subset of $L$. \\
We can now conclude. If $X$ is $S$-recognizable, then $\rep{S}(X)$ is
a regular subset of $L$ of the form \eqref{eq:slender2}. In view of
the first part of the proof, it is clear that $X$ is ultimately
periodic with period length $\mathrm{lcm}(C,\mathrm{lcm}_j|y_{i_j}\alpha_j|)$. 
The converse is immediate by Theorem \ref{the:pa}. 
\end{proof}

\begin{example}
    Consider the language $L=ab^*c\cup b(aa)^*c$. It contains exactly
    two words of each positive even length: $ab^{2i}c<ba^{2i}c$ and
    one word for each odd length larger than $2$: $ab^{2i+1}c$. The
    sequence $\mathbf{u}_L(n)$ is ultimately periodic of period two:
    $0,0,2,1,2,1,\ldots$.
\end{example}

\begin{corollary} 
    Let $S$ be a numeration system built on a slender language. If
    $X\subseteq \mathbb{N}$ is $S$-recognizable, then $\lambda X$ is
    $S$-recognizable for all $\lambda \in \mathbb{N}$.
\end{corollary}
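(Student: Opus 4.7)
The plan is to deduce the corollary directly from Theorem \ref{the:slender} combined with Theorem \ref{the:pa}. Since $S$ is built on a slender language, Theorem \ref{the:slender} gives a characterization of the $S$-recognizable subsets of $\mathbb{N}$: they are exactly the finite unions of arithmetic progressions. So the strategy is to write $X$ in that form, multiply through by $\lambda$, and then apply the characterization in the converse direction.

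More concretely, I would argue as follows. First, by the forward implication of Theorem \ref{the:slender}, there exist finitely many arithmetic progressions $P_1,\ldots,P_r$ (where an arithmetic progression means a set of the form $\{a+bn : n\in\mathbb{N}\}$, with the convention that singletons are allowed as the case $b=0$) such that $X=P_1\cup\cdots\cup P_r$. Next, observe that the image of an arithmetic progression under multiplication by a constant is again an arithmetic progression: concretely, $\lambda\{a+bn : n\in\mathbb{N}\} = \{\lambda a + \lambda b\, n : n\in\mathbb{N}\}$. Hence $\lambda X = \lambda P_1\cup\cdots\cup\lambda P_r$ is itself a finite union of arithmetic progressions.

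Finally, since each $\lambda P_i$ is $S$-recognizable by Theorem \ref{the:pa}, and since $S$-recognizability is preserved under finite unions (because regular languages are closed under union), it follows that $\lambda X$ is $S$-recognizable. Alternatively, one can invoke the converse direction of Theorem \ref{the:slender} to the same effect.

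There is no serious obstacle: the corollary is an essentially formal consequence of the preceding theorem, and the only observation of substance is that multiplication by a constant sends arithmetic progressions to arithmetic progressions, which is immediate from the definition.
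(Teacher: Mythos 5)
Your proof is correct and is exactly the argument the paper intends: the corollary is stated without proof precisely because it follows immediately from Theorem~\ref{the:slender} (recognizable $=$ finite union of arithmetic progressions) together with the observation that multiplication by $\lambda$ maps arithmetic progressions to arithmetic progressions, plus Theorem~\ref{the:pa}. Nothing is missing.
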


Finally, for a bounded language over a binary alphabet, the case is
completely settled too, the aim of this paper being primarily to
extend the following result.

\begin{theorem}\cite{LR}\label{the:LR}
    Let $\beta$ be a positive integer.  
    For the abstract numeration system $S=(a^*b^*,\{a<b\})$, 
    multiplication by $\beta^2$ preserves $S$-recognizability if and 
    only if $\beta$ is odd.
\end{theorem}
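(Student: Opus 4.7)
The plan relies on two facts from Sections~3 and~4: the formula $\val{2}(a^ib^j)=\binom{i+j+1}{2}+j$, and the characterization of regular subsets of $a^*b^*$ as finite unions of products $a^Ib^J$ with $I,J\subseteq\mathbb{N}$ ultimately periodic. By Theorem~\ref{the:6}, since $\mathbf{u}_{a^*b^*}(n)=n+1$, only multipliers of the form $\beta^2$ need be considered. Writing $s=i+j$ and setting $s'=\beta s+t$ in the identity $\beta^2\val{2}(a^ib^j)=\val{2}(a^{i'}b^{j'})$, one derives
\[
j'\;=\;\frac{\beta s(\beta-2t-1)-t(t+1)}{2}+\beta^2 j,
\]
and the constraints $0\le j'\le s'$ partition $\mathbb{N}^2$ into finitely many pieces indexed by a nonnegative integer $t$, on which the map $(i,j)\mapsto(i',j')$ is affine with integer coefficients.

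For necessity ($\beta$ even), I would take $X=\val{2}(a^*)$ (the triangular numbers), which is $S$-recognizable. Writing $\beta=2m$, one checks that the pairs $(i,0)$ with $i$ sufficiently large lie in the piece $t=m-1$, where the map evaluates to
\[
(i,0)\;\longmapsto\;\left(mi+\frac{m(m+1)}{2}-1,\;mi-\frac{m(m-1)}{2}\right).
\]
These pairs lie on the affine diagonal $i'-j'=m^2-1$, and a short combinatorial argument (any rectangle $I\times J$ containing two distinct points of this diagonal would also contain off-diagonal points) shows that such a diagonal is not a finite union of products of ultimately periodic sets. Hence $\beta^2 X$ is not $S$-recognizable.

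For sufficiency ($\beta$ odd), the key observation is that $(\beta-1)/2$ is a nonnegative integer, so the piece $t=(\beta-1)/2$ is well-defined and, since $\beta-2t-1=0$ there, the formula for $j'$ collapses to $\beta^2 j-(\beta^2-1)/8$, an integer because $\beta^2\equiv 1\pmod 8$. On this main piece (valid for $j\ge 1$ and $i\ge(\beta-1)j$), the formula $i'=\beta i-\beta(\beta-1)j+C$ combined with the boundary $i\ge(\beta-1)j$ ensures that the minimum value of $i'$ for fixed $j$ depends on $j$ only through its residue modulo the period of the input set $I$. Hence, for a rectangle $I\times J$ with $I,J$ ultimately periodic, the image decomposes into a finite union of rectangles. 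An analogous analysis on the remaining finitely many pieces yields only further rectangles (or finitely many exceptional points), and the total union is again a finite union of rectangles, so $\beta^2 X$ is $S$-recognizable.

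The main obstacle is the sufficiency argument: one must carry out a careful case analysis across all pieces, tracking the interaction between the coefficients of the affine maps and the modular periods of $I$ and $J$, and verifying that the "rectangular image" phenomenon persists on each piece. This is intrinsically tied to the integrality of $(\beta-1)/2$ and $(\beta^2-1)/8$, which holds for $\beta$ odd and fails for $\beta$ even, leaving no room for the cancellation that produces rectangles in the odd case.
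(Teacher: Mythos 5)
First, a point of reference: the paper does not actually prove Theorem~\ref{the:LR} --- it is imported from \cite{LR}, and the in-text machinery (Lemmas~\ref{lem:rec}, \ref{lem:long}, \ref{lem:notregular}, Propositions~\ref{pro:c1}--\ref{pro:c2}) is only run for $\ell\ge3$. Your proposal is the natural $\ell=2$ specialization of that machinery, and its setup and its necessity half are correct. Indeed $\val{2}(a^ib^j)=\binom{i+j+1}{2}+j$, the coefficient $\beta(\beta-2t-1)/2$ of $s$ in your formula for $j'$ is always an integer (so the piecewise maps are integral as claimed), and for $\beta=2m$ the only admissible offset $t$ for inputs $a^i$ is $t=m-1$, which yields $f_{\beta^2}(a^i)=a^{mi+m(m+1)/2-1}\,b^{\,mi-m(m-1)/2}$; I checked this against $\val{2}$ for $\beta=2$ and $\beta=4$. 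Both Parikh coordinates of the image then tend to infinity along the one-dimensional input $a^*$, so non-regularity follows either from your rectangle argument or directly from Lemma~\ref{lem:notregular} with $k=1$, $\ell=2$.

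The genuine gap is in the sufficiency half, and it sits exactly where you flag ``the main obstacle.'' Your fiberwise argument works on the piece $t=(\beta-1)/2$ only because the affine map is triangular there: $j'=\beta^2j-(\beta^2-1)/8$ depends on $j$ alone, so the image of $(a+p\mathbb{N})\times(b+q\mathbb{N})$ fibers over an arithmetic progression of $j'$-values. On the remaining pieces both $i'$ and $j'$ depend nontrivially on both $i$ and $j$, so the image of a rectangle is a linear set whose periods $M_t(p,0)$ and $M_t(0,q)$ are \emph{not} multiples of $\mathbf{e}_1,\mathbf{e}_2$ --- by Lemma~\ref{lem:rec} that is precisely the form one must rule out, so asserting that an ``analogous analysis'' yields rectangles is the point at issue, not a routine verification. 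The claim is true, but for a reason absent from your sketch: one checks $\det M_t=\beta^3$, so the image lattice has finite index $pq\beta^3$ in $\mathbb{Z}^2$ and hence contains $D\mathbb{Z}^2$ with $D=pq\beta^3$, i.e.\ it is a finite union of cosets of $D\mathbb{Z}^2$; moreover the two periods point into the fourth and second quadrants, so the cone they span contains the first quadrant and the image agrees with those cosets intersected with $\mathbb{N}^2$ up to a set carried by finitely many horizontal and vertical lines (and the pieces with $\beta-2t-1\ge2$ degenerate to finitely many vertical progressions). Some such lattice-plus-cone argument, or the explicit computation of \cite{LR}, is needed to close the odd case; as written, the harder direction is asserted rather than proved.
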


 \vskip 30pt

\section*{\normalsize 3. $\mathcal{B}_\ell$-representation of integers : combinatorial 
expansion}

In this section we determine the number of words of a given length in
$\mathcal{B}_\ell$ and we obtain an algorithm for computing
$\rep{\ell}(n)$. Interestingly, this algorithm is related to the
decomposition of $n$ as a sum of binomial coefficients of a specified
form. 
Since we shall be mainly interested by the language $\mathcal{B}_\ell$, 
we use the following notation.

\begin{definition}\label{def:not}
We set
$$
\mathbf{u}_\ell(n):=\mathbf{u}_{\mathcal{B}_\ell}(n)
=\#(\mathcal{B}_\ell\cap\Sigma_\ell^n)\quad \text{ and }\quad
\mathbf{v}_\ell(n):=\#(\mathcal{B}_\ell\cap\Sigma_\ell^{\le
  n})=\sum_{i=0}^n\mathbf{u}_\ell(i).
$$
\end{definition}

Let us also recall that the
binomial coefficient $\binom{i}{j}$ vanishes for integers $i<j$.

\begin{lemma}\label{lem:11}
    For all $\ell\ge 1$ and $n\ge 0$, we have 
    \begin{equation}\label{eq:uv}
\mathbf{u}_{\ell+1}(n)=\mathbf{v}_\ell(n) 
    \end{equation}
and 
    \begin{equation}\label{eq:ubin}
\mathbf{u}_\ell(n)=\binom{n+\ell-1}{\ell-1}.
  \end{equation}
\end{lemma}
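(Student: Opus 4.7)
The plan is to prove the two identities in the order given, as the first can then be leveraged to give a clean inductive proof of the second.

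For equation \eqref{eq:uv}, I would set up a simple bijection. Every word $w \in \mathcal{B}_{\ell+1}$ of length $n$ has a unique factorization $w = u \cdot a_{\ell+1}^k$ with $u \in \mathcal{B}_\ell$ and $0 \le k \le n$, where $k$ is the exponent of the final block $a_{\ell+1}^k$ and $u = a_1^{j_1}\cdots a_\ell^{j_\ell}$ is the remaining prefix. Grouping the words of $\mathcal{B}_{\ell+1}\cap\Sigma_{\ell+1}^n$ by the value of $k$ yields
$$\mathbf{u}_{\ell+1}(n) \;=\; \sum_{k=0}^n \mathbf{u}_\ell(n-k) \;=\; \sum_{i=0}^n \mathbf{u}_\ell(i) \;=\; \mathbf{v}_\ell(n),$$
where the last equality is the definition of $\mathbf{v}_\ell(n)$ from Definition~\ref{def:not}.

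For equation \eqref{eq:ubin}, I would proceed by induction on $\ell$. The base case $\ell=1$ is immediate: $\mathcal{B}_1 \cap \Sigma_1^n$ contains only the word $a_1^n$, so $\mathbf{u}_1(n) = 1 = \binom{n}{0}$. For the inductive step, assume $\mathbf{u}_\ell(n) = \binom{n+\ell-1}{\ell-1}$ for all $n\ge 0$. By \eqref{eq:uv} applied to the index $\ell$ and then the induction hypothesis,
$$\mathbf{u}_{\ell+1}(n) \;=\; \sum_{i=0}^n \mathbf{u}_\ell(i) \;=\; \sum_{i=0}^n \binom{i+\ell-1}{\ell-1}.$$
Then the hockey-stick identity $\sum_{i=0}^n \binom{i+\ell-1}{\ell-1} = \binom{n+\ell}{\ell}$ (provable in one line by iterating Pascal's rule, or itself by an easy induction on $n$) gives $\mathbf{u}_{\ell+1}(n) = \binom{n+\ell}{\ell} = \binom{n+(\ell+1)-1}{(\ell+1)-1}$, completing the induction.

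There is no real obstacle here; the only thing to be careful about is invoking the hockey-stick identity cleanly. As an alternative, one could give a direct proof of \eqref{eq:ubin} by noting that a word $a_1^{j_1}\cdots a_\ell^{j_\ell}$ of length $n$ corresponds bijectively to a solution of $j_1+\cdots+j_\ell = n$ in nonnegative integers, whose count is the standard stars-and-bars number $\binom{n+\ell-1}{\ell-1}$; but the inductive route above is slightly shorter and highlights the role of \eqref{eq:uv}.
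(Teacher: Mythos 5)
Your proposal is correct and follows essentially the same route as the paper: the same decomposition of a length-$n$ word of $\mathcal{B}_{\ell+1}$ by the size of its final $a_{\ell+1}$-block to get \eqref{eq:uv}, and then induction on $\ell$ combined with the hockey-stick identity $\sum_{i=0}^n\binom{i+\ell-1}{\ell-1}=\binom{n+\ell}{\ell}$ for \eqref{eq:ubin}. The only cosmetic difference is that the paper passes through the symmetric form $\binom{i+\ell-1}{i}$ before summing, which changes nothing.
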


\begin{proof}
  Relation \eqref{eq:uv} follows from the fact that the set of words
  of length $n$ belonging to $\mathcal{B}_{\ell+1}$ is partitioned
  according to
  $$\bigcup_{i=0}^n \left(a_1^*\cdots a_\ell^*\cap
      \Sigma_\ell^{i}\right)a_{\ell+1}^{n-i}.$$
    To obtain \eqref{eq:ubin}, we proceed by induction on $\ell\ge 1$.
    Indeed, for $\ell=1$, it is clear that $\mathbf{u}_1(n)=1$ for all
    $n\ge 0$. Assume that \eqref{eq:ubin} holds for $\ell$ and let us
    verify it still holds for $\ell+1$. Thanks to \eqref{eq:uv}, we
    have
  $$
  \hskip19mm\mathbf{u}_{\ell+1}(n)=\sum_{i=0}^n
  \mathbf{u}_{\ell}(i)=\sum_{i=0}^n
  \binom{i+\ell-1}{\ell-1}=\sum_{i=0}^n \binom{i+\ell-1}{i}=
  \binom{n+\ell}{\ell}.\hskip14mm\qedhere
  $$
\end{proof}

\begin{lemma}\label{lem:val}
    Let $S=(a_1^*\cdots a_\ell^*,\{a_1<\cdots <a_\ell\})$. We have
    \begin{equation}\label{eq:vall}
    \val{\ell}(a_1^{n_1}\cdots a_\ell^{n_\ell})=\sum_{i=1}^\ell
    \binom{n_i+\cdots +n_\ell+\ell-i}{\ell-i+1}.
    \end{equation}
    Consequently, for any $n\in\mathbb{N}$,
    $$
    |\rep{\ell}(n)|=k\Leftrightarrow \underbrace{\binom{k+\ell-1}{\ell}}_{\val{\ell}(a_1^k)} \le n\le \underbrace{\sum_{i=1}^\ell
    \binom{k+i-1}{i}}_{\val{\ell}(a_\ell^k)}.
    $$
\end{lemma}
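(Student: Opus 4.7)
The plan is to count directly the words of $\mathcal{B}_\ell$ that are genealogically strictly smaller than $w = a_1^{n_1}\cdots a_\ell^{n_\ell}$, since $\val{\ell}(w)$ equals this count by the definition of $\val{\ell}$. Setting $k=n_1+\cdots+n_\ell=|w|$ and $N_i:=n_i+\cdots+n_\ell$, such words split into two groups: those of length less than $k$, and those of length exactly $k$ that are lexicographically smaller than $w$.

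The first group is enumerated by $\mathbf{v}_\ell(k-1)$, which by Lemma~\ref{lem:11} equals $\mathbf{u}_{\ell+1}(k-1)=\binom{k+\ell-1}{\ell}$. This is exactly the $i=1$ summand in~\eqref{eq:vall}, since $N_1=k$.

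For the second group, a word $u=a_1^{m_1}\cdots a_\ell^{m_\ell}$ of length $k$ is lexicographically smaller than $w$ precisely when there exists a (unique) index $i\in\{1,\ldots,\ell\}$ with $m_j=n_j$ for $j<i$ and $m_i>n_i$. For such $i$, the tuples $(m_i,\ldots,m_\ell)$ with $m_i\ge n_i+1$, $m_j\ge 0$ for $j>i$, and $\sum_{j\ge i}m_j = N_i$ are in bijection with the nonnegative integer solutions of $m_i'+m_{i+1}+\cdots+m_\ell = N_{i+1}-1$ via $m_i'=m_i-n_i-1$. Stars-and-bars then yields $\binom{N_{i+1}+\ell-i-1}{\ell-i}$ such tuples. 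Summing over $i=1,\ldots,\ell$ and re-indexing $j=i+1$ converts this into $\sum_{j=2}^{\ell+1}\binom{N_j+\ell-j}{\ell-j+1}$; the boundary term $j=\ell+1$ vanishes by the convention $\binom{-1}{0}=0$ recalled before the lemma, while the other terms $j=2,\ldots,\ell$ are exactly the remaining summands of~\eqref{eq:vall}. Adding the counts from both groups proves the formula.

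The consequence follows by specializing: for $w=a_1^k$ only the $i=1$ summand survives, giving $\val{\ell}(a_1^k)=\binom{k+\ell-1}{\ell}$; for $w=a_\ell^k$ all $N_i$ equal $k$, and the substitution $j=\ell-i+1$ rewrites the sum as $\sum_{j=1}^\ell\binom{k+j-1}{j}$. Since the words of $\mathcal{B}_\ell$ of length exactly $k$ are $a_1^k,a_1^{k-1}a_2,\ldots,a_\ell^k$, all of the same length and thus forming a consecutive block in the genealogical order, their numerical values fill precisely the integer interval $[\val{\ell}(a_1^k),\val{\ell}(a_\ell^k)]$, which yields the stated equivalence. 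The main subtlety to watch is the index shift in the stars-and-bars sum and the correct handling of the boundary convention $\binom{i}{j}=0$ for $i<j$; once these are pinned down, everything else is routine.
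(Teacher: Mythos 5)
Your proof is correct, and it reaches \eqref{eq:vall} by a genuinely different route than the paper. The paper's proof is recursive on the alphabet: it uses the decomposition $\val{\ell}(a_1^{n_1}\cdots a_\ell^{n_\ell})=\val{\ell}(a_1^{n_1+\cdots+n_\ell})+\val{\{a_2,\ldots,a_\ell\}}(a_2^{n_2}\cdots a_\ell^{n_\ell})$, i.e., the observation that the ordered block of length-$k$ words of $\mathcal{B}_\ell$ is order-isomorphic (via the erasing morphism of the paper's example) to the words of length at most $k$ over $\{a_2,\ldots,a_\ell\}$; iterating and converting $\val{\ell-i+1}(a_1^{N_i})=\mathbf{v}_{\ell-i+1}(N_i-1)$ into binomials via Lemma~\ref{lem:11} gives the sum. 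You instead compute the rank of $w$ within its length-$k$ block in one shot, classifying the strictly smaller words of the same length by the first run in which they differ from $w$ (correctly noting the slightly counterintuitive point that $u<w$ forces $u$ to have \emph{more} copies of $a_i$ there) and counting each class by stars-and-bars; only the shorter words go through Lemma~\ref{lem:11}. The two computations agree term by term --- your class indexed by $i$ has size $\mathbf{u}_{\ell-i+1}(N_{i+1}-1)$, which is the paper's $(i+1)$-st summand --- so the difference is one of packaging: your argument is more direct and self-contained, at the cost of the index bookkeeping and of leaning on the stated convention to kill the boundary term $\binom{-1}{0}$ (it is cleaner to note that the class $i=\ell$ is empty outright, since $m_\ell>n_\ell$ with all earlier runs equal would force $|u|>k$). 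The paper's recursive identity, by contrast, is reused later (e.g., in the proof of Proposition~\ref{pro:min}), which is why the authors set it up that way. Your handling of the ``consequently'' part matches the paper's.
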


\begin{proof}
    From the structure of the ordered language $\mathcal{B}_\ell$, one
    can show that
    \begin{equation}
        \label{eq:decompp}
        \val{\ell}(a_1^{n_1}\cdots
    a_\ell^{n_\ell})=\val{\ell}(a_1^{n_1+\cdots
      +n_\ell})+\val{\{a_2,\ldots,a_\ell\}}(a_2^{n_2}\cdots
    a_{\ell}^{n_\ell})
    \end{equation}
    where notation like
    $\val{\{a_2,\ldots,a_\ell\}}(w)$ specifies not only the size but the
    alphabet of the bounded language on which the numeration system is
    built. To understand this formula, an example is given below in
    the case $\ell=3$. Notice that $\val{\{a_2,\ldots,a_\ell\}}(a_2^{n_2}\cdots
    a_{\ell}^{n_\ell})=\val{\ell-1}(a_1^{n_2}\cdots
    a_{\ell-1}^{n_\ell})$. Using this latter observation and iterating the decomposition \eqref{eq:decompp}, we obtain
    $$\val{\ell}(a_1^{n_1}\cdots a_\ell^{n_\ell})=\sum_{i=1}^{\ell}
    \val{\ell-i+1}(a_1^{n_i+\cdots +n_\ell}).$$
    Moreover, it is
    well known that $\val{\ell}(a_1^n)=\mathbf{v}_\ell(n-1)$.  Hence
    the conclusion follows using relations \eqref{eq:uv} and
    \eqref{eq:ubin}.
\end{proof}

\begin{example}
    Consider the words of length $3$ in the language $a^*b^*c^*$,
    $$aaa<aab<aac<abb<abc<acc<bbb<bbc<bcc<ccc.$$
    We have
    $\val{3}(aaa)=\binom{5}{3}=10$ and $\val{3}(acc)=15$. If we apply
  the erasing morphism $\varphi:\{a,b,c\}\to\{a,b,c\}^*$ defined by
  $\varphi(a)=\varepsilon$, $\varphi(b)=b$ and $\varphi(c)=c$ on the
  words of length $3$, we get
  $$\varepsilon <b<c<bb<bc<cc<bbb<bbc<bcc<ccc.$$
  So the ordered list
  of words of length $3$ in $a^*b^*c^*$ contains an ordered copy of
  the words of length at most $2$ in the language $b^*c^*$ and to
  obtain $\val{3}(acc)$, we just add to $\val{3}(aaa)$ the position of
  the word $cc$ in the ordered language $b^*c^*$. In other words,
  $\val{3}(acc)=\val{3}(aaa)+\val{2}(cc)$ where $\val{2}$ is
  considered as a map defined on the language $b^*c^*$. 
\end{example}

The following result is given in \cite{Kat}. Here we obtain a
bijective proof relying only on the use of abstract numeration systems
on a bounded language.

\begin{corollary}[Combinatorial numeration system]\label{cor:decomp}
    Let $\ell$ be a positive integer. Any integer $n\ge0$ can be
    uniquely written as
    \begin{equation}
        \label{eq:decompbin}
        n=\binom{z_\ell}{\ell}+\binom{z_{\ell-1}}{\ell-1}+\cdots+\binom{z_1}{1}
    \end{equation}
    with $z_\ell>z_{\ell-1}>\cdots >z_1\ge 0$.
\end{corollary}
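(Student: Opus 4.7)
The plan is to read the decomposition directly off Lemma \ref{lem:val} via a simple change of variables, and then invoke the bijectivity of $\rep{\ell}$ to get both existence and uniqueness at once.

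First I would fix $n\ge 0$, let $\rep{\ell}(n)=a_1^{n_1}\cdots a_\ell^{n_\ell}$, and apply the formula \eqref{eq:vall}. Each summand there has the shape $\binom{n_i+\cdots+n_\ell+\ell-i}{\ell-i+1}$, so I would introduce
$$z_{\ell-i+1}:=n_i+n_{i+1}+\cdots+n_\ell+\ell-i\quad\text{for }i=1,\ldots,\ell,$$
so that \eqref{eq:vall} becomes exactly the desired decomposition $n=\binom{z_\ell}{\ell}+\binom{z_{\ell-1}}{\ell-1}+\cdots+\binom{z_1}{1}$. Writing the change of variables in the reversed index $j=\ell-i+1$, one checks $z_j-z_{j-1}=n_{\ell-j+1}+1\ge 1$ for $j\ge 2$ and $z_1=n_\ell\ge 0$, so the strict inequalities $z_\ell>z_{\ell-1}>\cdots>z_1\ge 0$ hold automatically.

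For uniqueness I would observe that the above change of variables is invertible: given any integers $z_\ell>z_{\ell-1}>\cdots>z_1\ge 0$, define $n_\ell:=z_1$ and $n_{\ell-j+1}:=z_j-z_{j-1}-1$ for $j=2,\ldots,\ell$. The strict inequalities translate into $n_i\ge 0$ for every $i$, so $a_1^{n_1}\cdots a_\ell^{n_\ell}$ is a genuine word of $\mathcal{B}_\ell$, and plugging back into \eqref{eq:vall} shows that this word has numerical value equal to $\binom{z_\ell}{\ell}+\cdots+\binom{z_1}{1}$. Hence tuples $(z_\ell,\ldots,z_1)$ satisfying the stated conditions are in bijection with elements of $\mathcal{B}_\ell$, and composing with $\rep{\ell}^{-1}=\val{\ell}:\mathcal{B}_\ell\to\mathbb{N}$ (which is a bijection by construction of the abstract numeration system) yields the unique decomposition of each $n\in\mathbb{N}$.

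There is no real obstacle here, since all the work has been done in Lemma \ref{lem:val}; the only point that requires some care is getting the indexing right in the change of variables so that the inequalities $z_i>z_{i-1}$ really correspond one-to-one with the inequalities $n_i\ge 0$, and verifying that the endpoints ($z_1=n_\ell$ and $z_\ell=n_1+\cdots+n_\ell+\ell-1$) behave as expected.
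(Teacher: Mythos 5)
Your proof is correct and takes essentially the same route as the paper's: both deduce existence and uniqueness from the bijectivity of $\rep{\ell}$ combined with the formula \eqref{eq:vall} of Lemma~\ref{lem:val}. The only difference is that you spell out explicitly the change of variables $z_{\ell-i+1}=n_i+\cdots+n_\ell+\ell-i$ and its inverse, which the paper leaves implicit.
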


\begin{proof}
    The map $\rep{\ell}:\mathbb{N}\to a_1^*\cdots a_\ell^*$ is a
    one-to-one correspondence. So any integer $n$ has a unique
    representation of the form $a_1^{n_1}\cdots a_\ell^{n_\ell}$ and
    the conclusion follows from Lemma~\ref{lem:val}.
\end{proof}

The general method given in \cite[Algorithm 1]{LR} has a special form
in the case of the language $\mathcal{B}_\ell$. We derive an algorithm
computing the decomposition \eqref{eq:decompbin} or equivalently the
$\mathcal{B}_\ell$-representation of any integer.

\begin{Alg}
    Let {\tt n} be an integer and {\tt l} be a positive integer. The following algorithm produces integers {\tt z(l)},\ldots,{\tt z(1)} corresponding to the $z_i$'s appearing in the decomposition \eqref{eq:decompbin} of {\tt n} given in Corollary \ref{cor:decomp}.\\

    \indent\indent {\tt For i=l,l-1,\ldots,1 do}\\
    \indent\indent\indent {\tt if n>0,}\\
    \indent\indent\indent\indent {\tt find t such that $\binom{{\tt t}}{{\tt i}}\le {\tt n}< \binom{{\tt t}+1}{{\tt i}}$}\\
    \indent\indent\indent\indent {\tt z(i)$\leftarrow$t}\\
    \indent\indent\indent\indent {\tt n$\leftarrow$n-$\binom{{\tt t}}{{\tt i}}$}\\
    \indent\indent\indent {\tt otherwise, z(i)$\leftarrow$i-1}\\

\noindent    Consider now the triangular system having
    $n_{1},\ldots,n_{\ell}$ as unknowns
    $$n_{i}+\cdots+n_{\ell}={\tt z}(\ell-i+1)-\ell+i,\quad i=1,\ldots
    ,\ell.$$
    One has $\rep{\ell}({\tt n})=a_1^{n_1}\cdots a_\ell^{n_\ell}$.
\end{Alg}

\begin{remark}
    To speed up the computation of {\tt t} in the above algorithm, one
    can benefit from methods of numerical analysis. Indeed, for given
    {\tt i} and {\tt n}, $\binom{{\tt t}}{{\tt i}}-{\tt n}$ is a
    polynomial in {\tt t} of degree {\tt i} and we are looking for the
    largest root $z$ of this polynomial. Therefore, ${\tt t}=\lfloor
    z\rfloor$.
\end{remark}

\begin{example}
    For $\ell=3$, one gets for instance
$$12345678901234567890=\binom{4199737}{3}+\binom{3803913}{2}+\binom{1580642}{1}$$
and solving the system
$$\left.\begin{array}{rcl}
n_1+n_2+n_3&=&4199737-2\\
n_2+n_3&=&3803913-1\\
n_3&=&1580642\\
\end{array}\right\} \Leftrightarrow (n_1,n_2,n_3)=(395823,2223270,1580642),$$
we have
$\rep{3}(12345678901234567890)=a^{395823}b^{2223270}c^{1580642}$.
\end{example}

 \vskip 30pt

\section*{\normalsize 4. Regular subsets of $\mathcal{B}_\ell$}

To study preservation of recognizability after multiplication by a
constant, one has to consider an arbitrary recognizable subset
$X\subseteq\mathbb{N}$ and show that $\beta^\ell X$ is still
recognizable. 

\begin{definition}
    If $w$ is a word over $\Sigma_\ell$, $|w|_{a_j}$ counts the number
    of letters $a_j$ in~$w$. The {\em Parikh mapping}
    $\Psi$ maps a word $w\in\Sigma_\ell^*$ onto the vector
    $\Psi(w):=(|w|_{a_1},\ldots,|w|_{a_\ell})$.
\end{definition}

\begin{remark}
      In this setting of
    bounded languages, $\rep{\ell}$ and $\Psi$ are both one-to-one
    correspondences. Therefore, in what follows we shall make no
    distinction between an integer $n$, its
    $\mathcal{B}_\ell$-representation $\rep{\ell}(n)=a_1^{n_1}\cdots
    a_\ell^{n_\ell}\in\mathcal{B}_\ell$ and the corresponding Parikh
    vector $\Psi(\rep{\ell}(n))=(n_1,\ldots,n_\ell)\in\mathbb{N}^\ell$. 
    In examples, when considering cases $\ell=2$ or $3$, we shall use
    convenient alphabets like $\{a<b\}$ or $\{a<b<c\}$.
\end{remark}

\begin{definition}
    A set $Z\subseteq\mathbb{N}^\ell$ is {\em linear} if there exist
    ${\mathbf p}_0,{\mathbf p}_1,\ldots,{\mathbf p}_k\in\mathbb{N}^\ell$ such that
    $$
    Z={\mathbf p}_0+\mathbb{N}\, {\mathbf p}_1+\cdots +\mathbb{N}\, {\mathbf p}_k=\{{\mathbf p}_0+\lambda_1 {\mathbf p}_1
    +\cdots+\lambda_k {\mathbf p}_k\mid \lambda_1,\ldots,\lambda_k\in\mathbb{N}\}.
    $$
    The vectors
    ${\mathbf p}_1,\ldots,{\mathbf p}_k$ are said to be the {\em periods} of $Z$.  The set
    $Z$ is {\em $k$-dimensional} if it has exactly $k$ linearly
    independent periods over $\mathbb{Q}$.  A set is {\em semi-linear}
    if it is a finite union of linear sets. The set of periods of a
    semi-linear set is the union of the sets of periods of the
    corresponding linear sets. 
    Let $\mathbf{e}_i\in\mathbb N^\ell$, $1\le i\le\ell$, denote the 
    vector having $1$ in the $i$-th component and $0$ in the other  
    components.
\end{definition}

\begin{lemma}\label{lem:rec}
    A set $X\subseteq\mathbb{N}$ is $\mathcal{B}_\ell$-recognizable if 
    and only if $\Psi(\rep{\ell}(X))$ is a semi-linear set whose 
    periods are integer multiples of canonical vectors $\mathbf{e}_i$.
\end{lemma}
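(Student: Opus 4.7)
The plan is to prove both implications of the equivalence; the forward direction genuinely exploits the bounded structure of $\mathcal{B}_\ell$, while the converse is an unpacking of definitions.

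For $(\Leftarrow)$, I will write $\Psi(\rep{\ell}(X))$ as a finite union of linear sets $Z_j = \mathbf{p}^{(j)} + \sum_{s=1}^{k_j} \mathbb{N}\,\alpha_{s,j}\mathbf{e}_{i_{s,j}}$ and observe that the constraints defining a tuple $(n_1,\ldots,n_\ell)\in Z_j$ decouple coordinate-wise: for each $i\in\{1,\ldots,\ell\}$, the admissible values of $n_i$ form the set $p_i^{(j)}+\sum_{s:\,i_{s,j}=i}\mathbb{N}\,\alpha_{s,j}\subseteq\mathbb{N}$, which is the Parikh image of the regular language $L_i^{(j)}:=a_i^{p_i^{(j)}}\prod_{s:\,i_{s,j}=i}(a_i^{\alpha_{s,j}})^*\subseteq a_i^*$. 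The subset of $\mathcal{B}_\ell$ corresponding to $Z_j$ is then the regular concatenation $L_1^{(j)}L_2^{(j)}\cdots L_\ell^{(j)}$, and a finite union of such languages is still regular.

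For $(\Rightarrow)$, I will take a deterministic finite automaton $A$ with $L(A)=\rep{\ell}(X)$. Since $L(A)\subseteq\mathcal{B}_\ell$, replacing $A$ by its product with the $\ell$-state automaton of Figure~\ref{fig:aut} leaves the accepted language unchanged, so I may assume that each state $q$ of $A$ carries a tag $\pi(q)\in\{q_1,\ldots,q_\ell\}$ such that any transition $q\xrightarrow{a_j}q'$ with $\pi(q)=q_i$ satisfies $j\ge i$ and $\pi(q')=q_j$. The key observation will be that along any cycle of $A$, the sequence of tag-indices is non-decreasing and returns to its starting value, hence is constant; so every cycle of $A$ reads a word in $a_i^*$ where $i$ is the common tag-index. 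Combining this with the classical fact that the Parikh image of a regular language is semi-linear and that its periods may be taken to be the Parikh images of the simple cycles of any recognizing automaton, I conclude that the periods of $\Psi(\rep{\ell}(X))$ are all integer multiples of some $\mathbf{e}_i$.

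The main obstacle is the forward direction, specifically the step that forces cycles in a recognizing DFA to read powers of a single letter; the tag function $\pi$ obtained through the product construction with the canonical $\mathcal{B}_\ell$-automaton is precisely what makes this possible, and once this structural observation is in hand the remainder is routine.
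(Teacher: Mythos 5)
Your proof is correct, and it supplies a genuine argument where the paper's own proof is essentially a one-line assertion: the paper simply observes that the regular subsets of $\mathcal{B}_\ell$ are exactly the finite unions of languages $a_1^{s_1}(a_1^{t_1})^*\cdots a_\ell^{s_\ell}(a_\ell^{t_\ell})^*$ and lets the reader take Parikh images. Your route differs in packaging rather than in substance. For $(\Leftarrow)$ you reconstruct the language coordinate-wise from the linear sets; this matches the paper, and your version has the small advantage of handling several periods along the same direction $\mathbf{e}_i$ directly (via $a_i^{p_i}\prod_s(a_i^{\alpha_s})^*$) instead of first reducing a numerical semigroup to a finite union of arithmetic progressions. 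For $(\Rightarrow)$, where the paper would need to prove its normal form, you instead pass to Parikh images and invoke the refined Parikh theorem for regular languages (periods realizable as Parikh images of cycles of a recognizing automaton), combined with the tagging argument showing that every cycle of the product automaton reads a power of a single letter. That tagging argument is exactly the structural reason the paper's normal form holds, so the underlying insight is the same; what your approach buys is an explicit, checkable derivation, at the cost of leaning on one external classical fact (the cycle-period refinement of Parikh's theorem) that you state without proof or citation. If you want the write-up self-contained, either cite that refinement or note that the same tagging argument upgrades to the paper's language-level normal form: a trim automaton for a subset of $\mathcal{B}_\ell$ has its strongly connected components totally ordered by the tag, each reading a single letter, which yields the expression $a_1^{s_1}(a_1^{t_1})^*\cdots a_\ell^{s_\ell}(a_\ell^{t_\ell})^*$ directly.
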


\begin{proof}
Observe that the regular subsets of
$\mathcal{B}_\ell$ are exactly the finite unions of sets of the form
$a_1^{s_1}(a_1^{t_1})^*\cdots a_\ell^{s_\ell}(a_\ell^{t_\ell})^*$
with $s_i,t_i\in\mathbb N$. 
\end{proof}

With such a characterization, we obtain an
alternative proof of Theorem~\ref{the:pa}.

\begin{proposition}\label{pro:PA}
    Let $p,q\in\mathbb{N}$. 
    The set 
    $\Psi(\rep{\ell}(q+\mathbb{N}\,p))\subseteq\mathbb{N}^\ell$ is a 
    finite union of linear sets of the form
    $$
    \mathbf x+\mathbb{N}\,P\,\mathbf{e}_1+\cdots+
    \mathbb{N}\,P\,\mathbf{e}_\ell\quad\text{ for some }P\in\mathbb{N}.
    $$
\end{proposition}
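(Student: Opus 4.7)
The plan is to read off the residue of $\val{\ell}(a_1^{n_1}\cdots a_\ell^{n_\ell})$ modulo $p$ directly from the formula in Lemma~\ref{lem:val}. Writing $m_i := n_i + n_{i+1} + \cdots + n_\ell$, that formula becomes
$$\val{\ell}(a_1^{n_1} \cdots a_\ell^{n_\ell}) = \sum_{i=1}^{\ell} \binom{m_i + \ell - i}{\ell - i + 1}.$$
The goal is to find a positive integer $P$, automatically a multiple of $p$, such that this quantity modulo $p$ depends only on $(n_1 \bmod P, \ldots, n_\ell \bmod P)$. Since each $m_i$ is a sum of certain $n_j$'s, the residue $m_i \bmod P$ is determined by those of the $n_j$'s, so it suffices to arrange that $z \mapsto \binom{z}{k} \bmod p$ is purely $P$-periodic on $\mathbb{N}$ for every $k = 1, \ldots, \ell$.

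The existence of such a $P$ is the technical heart of the argument. Pascal's identity $\binom{z+1}{k} = \binom{z}{k} + \binom{z}{k-1}$ says that the vector $(\binom{z}{0}, \binom{z}{1}, \ldots, \binom{z}{\ell})$ evolves linearly under $z \mapsto z+1$ via a unipotent lower-triangular matrix $M$. Over $\mathbb{Z}/p\mathbb{Z}$, $M$ is invertible and so has finite multiplicative order $P$; then $\binom{z+P}{k} \equiv \binom{z}{k} \pmod p$ for every $z \ge 0$ and every $k \le \ell$, and evaluating at $z = 0$, $k = 1$ yields $P \equiv 0 \pmod p$. (Alternatively one can verify via Vandermonde and Lucas' theorem that a large power of $p$ works, then pass to composite $p$ by the Chinese Remainder Theorem.)

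Once $P$ is fixed, if $(n_1, \ldots, n_\ell) \equiv (n_1', \ldots, n_\ell') \pmod P$ componentwise, the two $\val{\ell}$-values coincide modulo $p$. Hence membership of $(n_1, \ldots, n_\ell)$ in $\Psi(\rep{\ell}(q + \mathbb{N}p))$ depends only on the residues of each coordinate modulo $P$, and the set decomposes as the finite union, over admissible residue tuples $\mathbf{x} \in \{0, 1, \ldots, P-1\}^\ell$, of the linear sets $\mathbf{x} + \mathbb{N}P\mathbf{e}_1 + \cdots + \mathbb{N}P\mathbf{e}_\ell$, which is exactly the claimed decomposition. The main difficulty is securing pure periodicity starting from $z = 0$ (rather than merely eventual periodicity), since $m_i + \ell - i$ can be as small as zero; the Pascal-matrix construction is tailored to deliver this, and the remainder of the proof is straightforward bookkeeping.
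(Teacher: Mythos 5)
Your argument follows essentially the same route as the paper's proof: both start from the formula of Lemma~\ref{lem:val}, reduce everything to the pure periodicity modulo $p$ of the sequences $n\mapsto\binom{n}{k}$ for $1\le k\le\ell$, take $P$ to be a common period, and decompose the set according to the residues of the Parikh coordinates modulo $P$. The one substantive difference is that the paper outsources the periodicity to the cited reference~\cite{Za}, whereas you prove it directly from the finite order of the unipotent Pascal matrix in $\mathrm{GL}_{\ell+1}(\mathbb Z/p\mathbb Z)$; this is correct, self-contained, and rightly insists on \emph{pure} rather than eventual periodicity (needed because the arguments $m_i+\ell-i$ start at $\ell-i$). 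There is, however, one slip in your final step: $n\in q+\mathbb N\,p$ means $n\equiv q\pmod p$ \emph{and} $n\ge q$, so for $q\ge p$ membership of $(n_1,\ldots,n_\ell)$ in $\Psi(\rep{\ell}(q+\mathbb N\,p))$ is not determined by the coordinates modulo $P$ alone, and the union over base points $\mathbf x\in\{0,\ldots,P-1\}^\ell$ overshoots (already for $\ell=1$, $p=2$, $q=10$ the set is $\{10,12,14,\ldots\}$, not $2\mathbb N$). The repair is immediate and is exactly what the paper does: adding $P$ to any coordinate can only increase $\val{\ell}$, so the threshold condition is inherited upwards, and it suffices to enlarge the set of base points to the finite box $\sup_i x_i<q+P$, keeping only those $\mathbf x$ with $\val{\ell}(a_1^{x_1}\cdots a_\ell^{x_\ell})\in q+\mathbb N\,p$.
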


\begin{proof}
    We use equation \eqref{eq:vall}. 
    For a given $i$, $1\le i\le\ell$, the sequence
    $(\binom n{\ell-i+1}\mod p)_{n\in\mathbb N}$ is periodic (see e.g.
    \cite{Za}).
    Denote the period lengths by $\pi_i$ and set
    $P=\mathrm{lcm}_i\,\pi_i$.
    Then
    $$
    \val{\ell}(a_1^{n_1}\cdots a_i^{n_i}\cdots a_\ell^{n_\ell})\equiv 
    \val{\ell}(a_1^{n_1}\cdots a_i^{n_i+P}\cdots a_\ell^{n_\ell}) 
    \pmod{p}\quad\mbox{ for all }i,\,1\le i\le\ell.
    $$
    We have just shown that 
    $\mathbf x=(x_1,\ldots,x_\ell)\in\mathbb{N}^\ell$ belongs to 
    $\Psi(\rep{\ell}(q+\mathbb{N}\, p))$ if and only if 
    $\mathbf x+n_1\,P\,\mathbf{e}_1+\cdots+n_\ell\,P\,\mathbf{e}_\ell$ 
    belongs to the same set for all $n_1,\ldots,n_\ell\in\mathbb{N}$. 
    Therefore
    $$
    \hskip28mm\Psi(\rep{\ell}(q+\mathbb{N}\, p))=
    \bigcup_{\substack{\val{\ell}(a_1^{x_1}\cdots a_\ell^{x_\ell})\in
        q+\mathbb{N}\, p\\ 0\le\sup x_i<q+P}}
    (\mathbf x+\mathbb{N}\,P\,\mathbf{e}_1+\cdots+
    \mathbb{N}\,P\,\mathbf{e}_\ell).\hskip23mm\qedhere
    $$
\end{proof}

\begin{example}
    In Figure \ref{fig:pa5}, the $x$-axis (resp.  $y$-axis) counts the
    number of $a_1$'s (resp. $a_2$'s) in a word.  The empty word
    corresponds to the lower-left corner. A point in $\mathbb{N}^2$ of
    coordinates $(i,j)$ has its color determined by the value of
    $\val{2}(a_1^i\, a_2^j)$ modulo $p$ (with $p=3,5,6$ and $8$
    respectively). There are therefore $p$ possible colors. In this
    figure, we represent words $a_1^i\, a_2^j$ for $0\le i,j\le 19$.
    \begin{figure}[htbp]
        \centering
        \includegraphics[width=2.8cm]{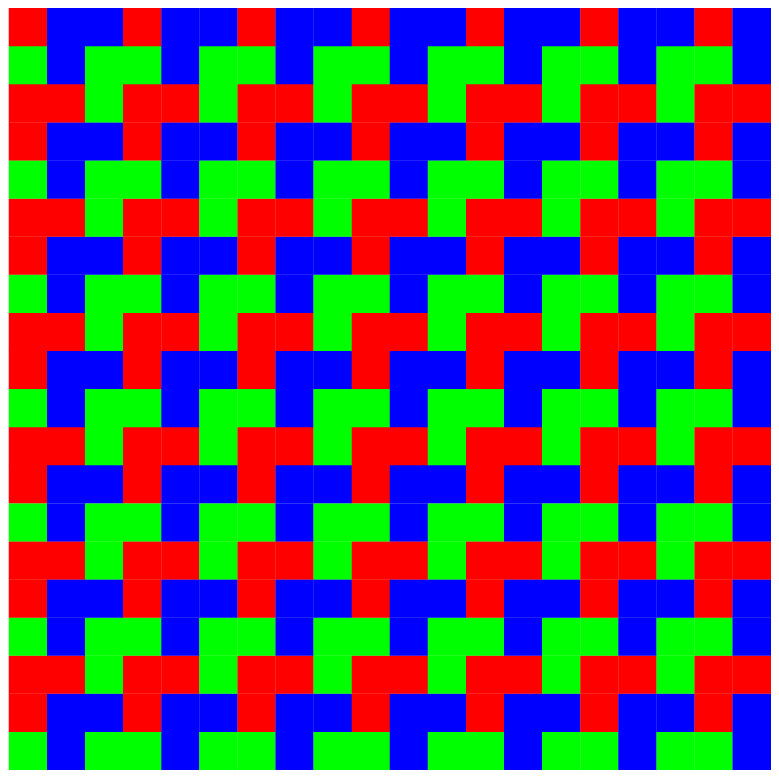}\quad 
        \includegraphics[width=2.8cm]{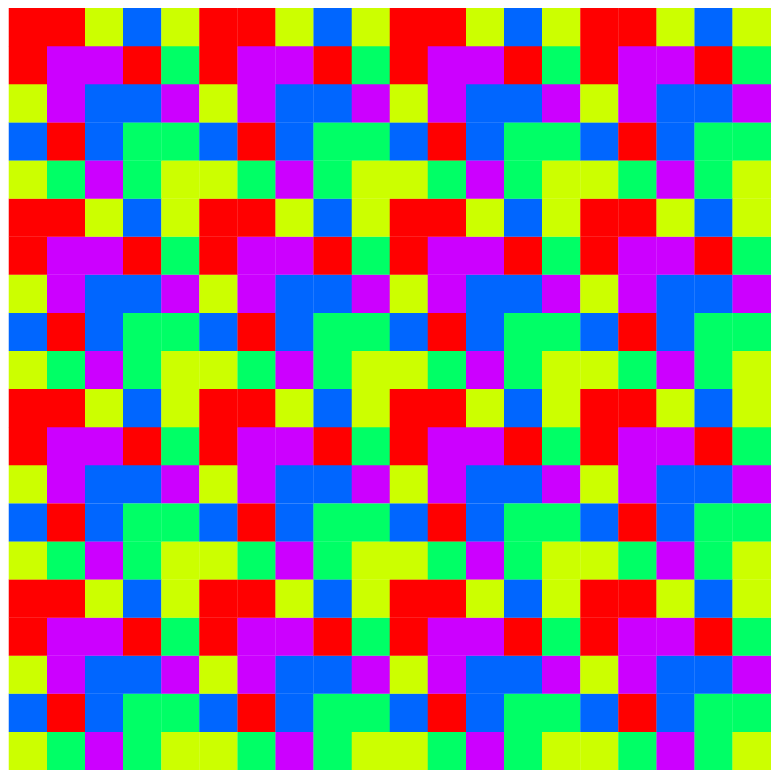}\quad 
        \includegraphics[width=2.8cm]{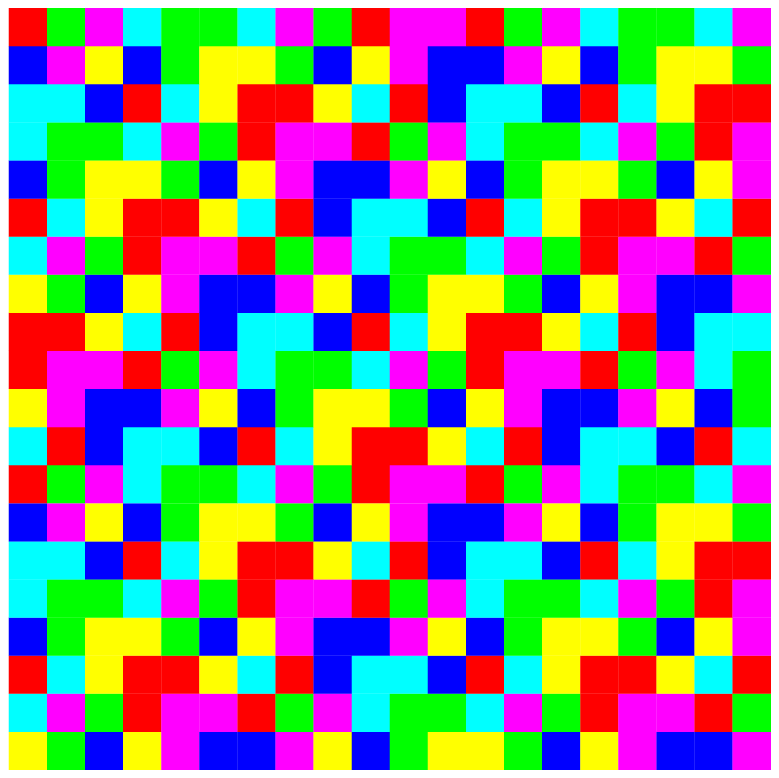}\quad 
        \includegraphics[width=2.8cm]{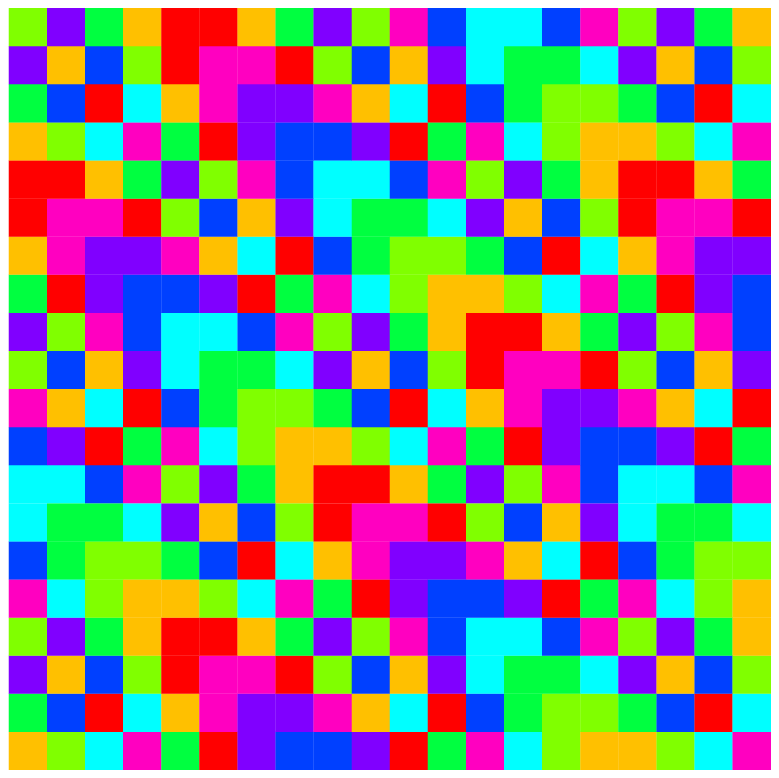}
        \caption{$\Psi(\rep{2}(q+\mathbb{N}\,p))$ for $p=3,5,6,8$.}
        \label{fig:pa5}
    \end{figure}   
\end{example}

 \vskip 30pt

\section*{\normalsize 5. Multiplication by $\lambda=\beta^\ell$}

In the case of a bounded language on $\ell$ letters, if multiplication
by some constant preserves recognizability, then, by Theorem
\ref{the:6} and Lemma \ref{lem:11}, this constant must be a $\ell$-th
power.

The next result gives a relationship between the length of the
$\mathcal{B}_\ell$-representations of $n$ and $\beta^\ell n$, roughly
by a factor $\beta$.

\begin{lemma}\label{lem:long}
For sufficiently large $n\in\mathbb N$, we have
$$
|\mathrm{rep}_\ell(\beta^\ell n)|=\beta\,|\mathrm{rep}_\ell(n)|+
\left\lceil\frac{(\beta-1)(\ell+1)}2\right\rceil-i
$$
for some $i\in\{0,1,\ldots,\beta\}$.
\end{lemma}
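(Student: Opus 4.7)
The plan is to reduce the claim, via the length characterization that $|\rep{\ell}(N)| = k$ if and only if $f(k) \le N < f(k+1)$, where $f(k) := \binom{k+\ell-1}{\ell}$ (this follows from Lemma~\ref{lem:val} together with the hockey-stick identity $\sum_{i=1}^\ell \binom{k+i-1}{i} = \binom{k+\ell}{\ell} - 1$, which makes $\val{\ell}(a_1^k) = f(k)$ and $\val{\ell}(a_\ell^k) = f(k+1)-1$), to two polynomial inequalities in $k$. Indeed, setting $k := |\rep{\ell}(n)|$ and $T := \lceil (\beta-1)(\ell+1)/2 \rceil$, we have $\beta^\ell f(k) \le \beta^\ell n \le \beta^\ell(f(k+1)-1)$, so $k' := |\rep{\ell}(\beta^\ell n)|$ lies in the interval $[\beta k + T - \beta,\ \beta k + T]$ as soon as
\begin{align*}
    (L)\quad & f(\beta k + T - \beta) \le \beta^\ell f(k), \\
    (U)\quad & f(\beta k + T + 1) > \beta^\ell(f(k+1) - 1)
\end{align*}
both hold, for sufficiently large $k$.

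Both sides of each inequality are degree-$\ell$ polynomials in $k$ whose $k^\ell$-terms cancel. From the expansion $f(m) = m^\ell/\ell! + (\ell-1)m^{\ell-1}/(2(\ell-1)!) + O(m^{\ell-2})$, a direct computation yields
\begin{align*}
    [k^{\ell-1}]\bigl(f(\beta k + t) - \beta^\ell f(k)\bigr) &= \tfrac{\beta^{\ell-1}}{(\ell-1)!}\bigl(t - \tfrac{(\beta-1)(\ell-1)}{2}\bigr), \\
    [k^{\ell-1}]\bigl(f(\beta k + t) - \beta^\ell f(k+1)\bigr) &= \tfrac{\beta^{\ell-1}}{(\ell-1)!}\bigl(t - \tfrac{(\beta-1)(\ell+1)}{2} - 1\bigr).
\end{align*}
Evaluating the first at $t = T - \beta$ produces $-1$ or $-\tfrac12$ according to the parity of $(\beta-1)(\ell+1)$; as this is strictly negative, $(L)$ holds for all sufficiently large $k$. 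Evaluating the second at $t = T + 1$ produces $0$ or $\tfrac12$ in the two parity cases; when the value is $\tfrac12$ (odd parity), $(U)$ follows immediately.

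The main obstacle is the boundary case, where $(\beta-1)(\ell+1)$ is even and the $k^{\ell-1}$ coefficient of the upper-bound difference vanishes. One then examines the $k^{\ell-2}$ term. Using the identity $[m^{\ell-2}]\prod_{j=0}^{\ell-1}(m+j) = \ell(\ell-1)(\ell-2)(3\ell-1)/24$ and substituting $t = T + 1 = (\beta-1)(\ell+1)/2 + 1$, the cross terms collapse (via the algebraic simplification $(\beta-1)(\ell+1) + 2(\beta+1) + 2(\ell - 2\beta - 1) = (\beta+1)(\ell-1)$) to give
\begin{equation*}
    [k^{\ell-2}]\bigl(f(\beta k + T + 1) - \beta^\ell f(k+1)\bigr) = \frac{\beta^{\ell-2}(\beta^2-1)\,\ell(\ell^2-1)}{24\,\ell!},
\end{equation*}
which is strictly positive for every $\beta \ge 2$ and $\ell \ge 2$. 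Consequently $f(\beta k + T + 1) - \beta^\ell f(k+1) \to +\infty$ and in particular exceeds $-\beta^\ell$ for all large $k$, establishing $(U)$ in the remaining case.
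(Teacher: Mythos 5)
Your proof is correct. The reduction to the two inequalities $(L)$ and $(U)$ via the characterization $|\rep{\ell}(N)|=k \Leftrightarrow f(k)\le N<f(k+1)$ with $f(k)=\binom{k+\ell-1}{\ell}$ is valid, the coefficient computations check out (I verified the $[k^{\ell-1}]$ formulas and, algebraically, that the $[k^{\ell-2}]$ coefficient in the boundary case collapses to $\beta^{\ell-2}(\beta^2-1)(\ell+1)/(24(\ell-2)!)$, which matches your expression), and the case split on the parity of $(\beta-1)(\ell+1)$ covers everything. The paper's proof rests on the same analytic core --- comparing the extremal words $a_1^k$ and $a_\ell^k$ of each length after multiplication by $\beta^\ell$, with the decisive step in the even-parity case being exactly your positivity of $f(\beta k+T+1)-\beta^\ell f(k+1)$ at order $k^{\ell-2}$ (the paper's $\frac{c(\beta+1)}{12}\frac{(\beta q)^{\ell-2}}{(\ell-2)!}$ term) --- but it is organized differently: the paper expands $\beta^\ell\val{\ell}(a_\ell^q)$ in the combinatorial numeration system, introduces an auxiliary parameter $d$ with $z_{\ell-1}=d\beta q+o(q)$, and deduces $d^{\ell-1}=1/2$ or $d=1$ according to the parity. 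Your version dispenses with $d$ and with the combinatorial expansion entirely, reducing everything to two explicit polynomial inequalities; this is cleaner and slightly more self-contained. One cosmetic remark: for $\ell=2$ the difference $f(\beta k+T+1)-\beta^\ell f(k+1)$ in the boundary case is a positive \emph{constant} rather than a quantity tending to $+\infty$, but that still exceeds $-\beta^\ell$, so your conclusion stands in all cases $\ell\ge1$.
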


\begin{proof}
Consider first $n=\mathrm{val}_\ell(a_\ell^q)$ for some sufficiently
large $q\in\mathbb N$, and let 
\begin{multline*}
\beta^\ell\left(\binom{q+\ell-1}\ell+\binom{q+\ell-2}{\ell-1}+\cdots+\binom q1\right) \\ 
=\binom{z_\ell+\ell-1}{\ell}+\binom{z_{\ell-1}+\ell-2}{\ell-1}+\cdots+\binom{z_1}1
\end{multline*}
for some integers $z_\ell\ge z_{\ell-1}\ge\cdots\ge z_1\ge0$
(depending on $q$).  Then we have
$$
\beta^\ell\left(\frac{q^\ell}{\ell!}+\frac{(\ell+1)\,q^{\ell-1}}{2\,(\ell-1)!}+
\mathcal O(q^{\ell-2})\right)
=\frac{z_\ell^\ell}{\ell!}+\frac{(\ell-1)\,z_\ell^{\ell-1}}{2\,(\ell-1)!}+
\frac{z_{\ell-1}^{\ell-1}}{(\ell-1)!}+\mathcal O(z_\ell^{\ell-2}),
$$ 
thus $z_\ell=\beta q+\mathcal O(1)$.
Since $z_\ell\ge z_{\ell-1}$, we have $z_{\ell-1}=d\beta q+o(q)$ with 
$0\le d\le 1$ and we obtain
\begin{gather*}
\frac{\beta^\ell(\ell+1)}{2\,(\ell-1)!}q^{\ell-1}
=\frac{\beta^{\ell-1}}{(\ell-1)!}
\left((z_\ell-\beta q)+\frac{\ell-1}2+d^{\ell-1}\right)q^{\ell-1}
+\mathcal O(q^{\ell-2}), \\
z_\ell=\beta q+\frac{(\beta-1)(\ell+1)}2+1-d^{\ell-1}.
\end{gather*} 
Set $c=(\beta-1)(\ell+1)/2$ and assume first $c\not\in\mathbb Z$.
Then we have $d^{\ell-1}=1/2$, hence 
$$
|f_{\beta^\ell}(a_\ell^q))|=z_\ell=\beta q+\lceil c\rceil.
$$
Since $\mathrm{val}_\ell(a_1^q)=\mathrm{val}_\ell(a_\ell^{q-1})+1$, we have
$$
|\mathrm{rep}_\ell(\beta^\ell\mathrm{val}_\ell(a_1^q))|\ge
\beta(q-1)+\lceil c\rceil=\beta q+\lceil c\rceil-\beta.
$$
If $|\mathrm{rep}_\ell(n)|=q$, then $|\mathrm{rep}_\ell(\beta^\ell n)|$ is 
clearly between these two values. \\
Assume now $c\in\mathbb Z$.  Then we have $d\in\{0,1\}$.  Similarly to
the computation of $c_{\ell-2}$ achieved in Remark~\ref{rem:ccc}
below, we obtain that
\begin{align*}
\binom{\beta q+c+\ell}\ell & -\beta^\ell\binom{q+\ell}\ell \\
& = \left(\frac{c^2}2+\frac{(\ell+1)c}2+\frac{(1-\beta^2)(3\ell+2)(\ell+1)}{24}
\right)\frac{(\beta q)^{\ell-2}}{(\ell-2)!}+\mathcal O(q^{\ell-3}) \\ 
& = \frac{c(\beta+1)}{12}\frac{(\beta q)^{\ell-2}}{(\ell-2)!}+
\mathcal O(q^{\ell-3}). 
\end{align*}
This means that the numerical value of the first word of length 
$\beta q+c+1$ is larger than $\beta^\ell\mathrm{val}_\ell(a_1^{q+1})$ 
for large enough $q$.
We infer that $d=1$ since
$$
z_\ell=|\mathrm{rep}_\ell(\beta^\ell\mathrm{val}_\ell(a_\ell^q))| \le
|\mathrm{rep}_\ell(\beta^\ell\mathrm{val}_\ell(a_1^{q+1}))|<\beta q+c+1.
$$
As above, we have $|\mathrm{rep}_\ell(\beta^\ell\mathrm{val}_\ell(a_1^q))|\ge\beta q+c-\beta$, and
the lemma is proved.
\end{proof}

In certain cases, we can give a formula for the entire expansion of 
$\beta^\ell\mathrm{val}_\ell(a_\ell^q)$.

\begin{lemma}\label{lem:constants} 
Define $c_{\ell-1},c_{\ell-2},\ldots,c_0$ recursively by
$$
c_k =k!\,(\beta^{\ell-k}-1)\sum_{i=k}^\ell\frac{S_1(i,k)}{i!}-
\sum_{i=k+2}^\ell\sum_{j=k+1}^i\frac{S_1(i,j)\,j!}{i!\,(j-k)!}c_{i-1}^{j-k}
$$
where $S_1(i,j)$ are the unsigned Stirling numbers of the first kind. 
Then we have
\begin{multline}\label{eq:constants}
\beta^\ell\left(\binom{q+\ell-1}{\ell}+\binom{q+\ell-2}{\ell-1}+\cdots+
\binom q1\right) \\
=\binom{\beta q+c_{\ell-1}+\ell-1}{\ell}+
\binom{\beta q+c_{\ell-2}+\ell-2}{\ell-1}+\cdots+\binom{\beta q+c_0}1.
\end{multline}
Moreover, if all $c_k$'s, $0\le k<\ell$, are integers and 
$c_{\ell-1}\ge c_{\ell-2}\ge\cdots\ge c_0$, then
$$
\mathrm{rep}_\ell(\beta^\ell\mathrm{val}_\ell(a_\ell^q))=
a_1^{c_{\ell-1}-c_{\ell-2}}a_2^{c_{\ell-2}-c_{\ell-3}}\cdots 
a_{\ell-1}^{c_1-c_0}a_\ell^{\beta q+c_0}
$$
for all $q\ge-c_0/\beta$, hence 
$\mathrm{rep}_\ell(\beta^\ell\mathrm{val}_\ell(a_\ell^*))$ is regular.
\end{lemma}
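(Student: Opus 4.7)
The key observation is that, by Lemma~\ref{lem:val}, the left-hand side of \eqref{eq:constants} equals $\beta^\ell\val{\ell}(a_\ell^q)$, and that both sides of \eqref{eq:constants} are polynomials in $q$ of degree $\ell$ (the $c_k$'s do not depend on $q$). The plan is therefore to prove \eqref{eq:constants} as a polynomial identity in $q$ by matching coefficients; this will produce a triangular system in the unknowns $c_{\ell-1},\dots,c_0$ that is uniquely solvable top--down, and that solution will coincide with the recursion in the statement.

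First I would expand each rising factorial through the unsigned Stirling numbers via $q(q+1)\cdots(q+k)=\sum_{m=1}^{k+1}S_1(k+1,m)\,q^m$. Applied to $\binom{q+k}{k+1}$ on the left, this shows that the coefficient of $q^m$ on the left-hand side of \eqref{eq:constants} is $\beta^\ell\sum_{i=m}^\ell S_1(i,m)/i!$. On the right, I would apply the same identity with argument $y=\beta q+c_k$ in $\binom{y+k}{k+1}$, then expand $(\beta q+c_k)^j$ by the binomial theorem, to obtain the coefficient $\beta^m\sum_{i=m}^\ell\frac{1}{i!}\sum_{j=m}^iS_1(i,j)\binom{j}{m}c_{i-1}^{j-m}$ of $q^m$.

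Then I would equate these coefficients for $m=\ell,\ell-1,\dots,1$. The $m=\ell$ equation holds automatically ($S_1(\ell,\ell)=1$). For $1\le m<\ell$, the $j=m$ terms on the right are free of the $c_k$'s and combine with the left to yield the factor $(\beta^{\ell-m}-1)\sum_{i=m}^\ell S_1(i,m)/i!$; the single $(i,j)=(m+1,m+1)$ term isolates $\beta^m c_m/m!$; and the remaining terms $i\ge m+2$, $j\ge m+1$ involve only $c_{m+1},\dots,c_{\ell-1}$. Multiplying through by $m!$ and using $m!\binom{j}{m}=j!/(j-m)!$ yields exactly the recursion of the statement, and triangularity guarantees existence and uniqueness of the solution. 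The main obstacle here is purely the Stirling/binomial bookkeeping — no nontrivial identity is needed beyond the standard rising-factorial expansion.

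Finally, for the ``moreover'' part, I would set $z_k=\beta q+c_{k-1}+k-1$ for $k=1,\dots,\ell$. Under the integrality and monotonicity hypotheses $c_{\ell-1}\ge c_{\ell-2}\ge\cdots\ge c_0$, one has $z_\ell>z_{\ell-1}>\cdots>z_1$, while $z_1\ge 0$ is equivalent to $q\ge-c_0/\beta$. By the uniqueness asserted in Corollary~\ref{cor:decomp}, these $z_k$'s are the combinatorial expansion of $\beta^\ell\val{\ell}(a_\ell^q)$. Inverting the triangular system $n_i+\cdots+n_\ell=z_{\ell-i+1}-\ell+i$ coming from \eqref{eq:vall} then gives $n_i=c_{\ell-i}-c_{\ell-i-1}$ for $i<\ell$ and $n_\ell=\beta q+c_0$, which is the claimed formula. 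Letting $q$ range over admissible values presents $\rep{\ell}(\beta^\ell\val{\ell}(a_\ell^*))$ as a finite set (the representations for the finitely many $q<-c_0/\beta$, if any) together with the single-loop language $a_1^{c_{\ell-1}-c_{\ell-2}}\cdots a_{\ell-1}^{c_1-c_0}a_\ell^{c_0+\beta q_0}(a_\ell^\beta)^*$ for the least admissible $q_0$, which is regular.
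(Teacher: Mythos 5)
Your proposal is correct and follows essentially the same route as the paper: expand each binomial via the rising-factorial/Stirling identity $i!\binom{x+i-1}{i}=\sum_{j=1}^i S_1(i,j)x^j$, match coefficients of powers of $q$, and read off the recursion from the $j=m$ terms, the $(i,j)=(m+1,m+1)$ term, and the remainder (the paper treats the ``moreover'' part as obvious, which you spell out correctly). The only slip is the range of coefficients you equate: stopping at $m=1$ leaves $c_0$ undetermined and the constant terms of \eqref{eq:constants} unverified, so you must also match the coefficient of $q^0$ (where, with the convention $S_1(i,0)=0$, the same bookkeeping yields exactly the stated recursion for $c_0$); the paper's displayed condition indeed runs over $0\le k\le\ell$.
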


\begin{proof}
The second part of the lemma is obvious.
Thus we only have to show (\ref{eq:constants}).
Recall that the unsigned Stirling numbers of the first kind are defined by
$$
i!\binom{x+i-1}i=x(x+1)\cdots(x+i-1)=\sum_{j=1}^iS_1(i,j)x^j
$$
and satisfy the recursion
$$
S_1(i+1,j)=S_1(i,j-1)+i\,S_1(i,j)\quad\mbox{ for }1\le j\le i
$$
with $S_1(i,j)=0$ if $i<j$ or $j=0$.
Therefore we can write (\ref{eq:constants}) as
\begin{gather*}
\beta^\ell\left(\sum_{k=1}^\ell\frac{S_1(\ell,k)}{\ell!}q^k+
\sum_{k=1}^{\ell-1}\frac{S_1(\ell-1,k)}{(\ell-1)!}q^k+\cdots+q\right)
\qquad\qquad\qquad\qquad\qquad\qquad \\
= \sum_{j=1}^\ell\frac{S_1(\ell,j)}{\ell!}(\beta q+c_{\ell-1})^j+
\sum_{j=1}^{\ell-1}\frac{S_1(\ell-1,j)}{(\ell-1)!}(\beta q+c_{\ell-2})^j+\cdots+
\beta q+c_0, \\
\beta^\ell\sum_{i=1}^\ell\sum_{k=1}^i\frac{S_1(i,k)}{i!}q^k = 
\sum_{i=1}^\ell\sum_{j=1}^i\frac{S_1(i,j)}{i!}
\sum_{k=0}^j\binom jkc_{i-1}^{j-k}\beta^kq^k, \\
\beta^{\ell-k}\sum_{i=k}^\ell\frac{S_1(i,k)}{i!}=
\sum_{i=k}^\ell\sum_{j=k}^i\frac{S_1(i,j)\,j!}{i!\,(j-k)!\,k!}c_{i-1}^{j-k}\quad
\mbox{ for }0\le k\le\ell.
\end{gather*}
Since the last equation holds for $k=\ell$ and
$$
\beta^{\ell-k}\sum_{i=k}^\ell\frac{S_1(i,k)}{i!}=
\sum_{i=k}^\ell\frac{S_1(i,k)}{i!}+\frac{c_k}{k!}+
\sum_{i=k+2}^\ell\sum_{j=k+1}^i\frac{S_1(i,j)\,j!}{i!\,(j-k)!\,k!}c_{i-1}^{j-k}
$$
for $0\le k<\ell$ by the definition of $c_k$, the lemma is proved. 
\end{proof}

\begin{remark}\label{rem:ccc}
The formula for $c_k$ can be simplified using
$$
\sum_{i=k}^\ell\frac{S_1(i,k)}{i!}=\left\{\begin{array}{cl}S_1(\ell+1,k+1)/\ell!
& \mbox{for }k\ge1, \\ 0 & \mbox{for }k=0.\end{array}\right.
$$ 
Note that $c_{\ell-1}$ is the constant $c$ in the proof of 
Lemma~\ref{lem:long}, 
$$
c_{\ell-1}=(\beta-1)\frac{S_1(\ell+1,\ell)}\ell=
\frac{(\beta-1)(\ell+1)}2\quad\mbox{ for }\ell\ge2.
$$
Since $S_1(\ell+1,\ell-1)=S_1(\ell,\ell-2)+\ell\frac{\ell(\ell-1)}2=
\frac{(3\ell+2)(\ell+1)\ell(\ell-1)}{24}$, we have
\begin{align*}
c_{\ell-2} & = (\beta^2-1)\frac{(3\ell+2)(\ell+1)}{24}-\frac{\ell-1}2c_{\ell-1}-
\frac12c_{\ell-1}^2 \\
& = c_{\ell-1}\left(1-\frac{\beta+1}{12}\right) = \frac{(\beta-1)(\ell+1)}2-
\frac{(\beta^2-1)(\ell+1)}{24}\quad\mbox{ for }\ell\ge3.
\end{align*}
\end{remark}

We now turn to our main counting argument that will be used to obtain
that recognizability is not preserved through multiplication by a
constant $\lambda$. 
Recall that $f_\lambda:\mathcal{B}_\ell\to\mathcal{B}_\ell$ is defined 
by $f_\lambda(w)=\mathrm{rep}_\ell(\lambda\,\mathrm{val}_\ell(w))$.

\begin{lemma}\label{lem:notregular}
Let $A$ be a $k$-dimensional linear subset of $\mathbb{N}^\ell$ for some integer $k<\ell$ and 
$B=\Psi^{-1}(A)\cap\mathcal{B}_\ell$ be the corresponding subset of $\mathcal{B}_\ell$. 
If $\Psi(f_{\beta^\ell}(B))$ contains a sequence 
$x^{(n)}=(x_1^{(n)},\ldots,x_\ell^{(n)})$ such that 
$\min(x_{j_1}^{(n)},x_{j_2}^{(n)},\ldots,x_{j_{k+1}}^{(n)})\to\infty$ as 
$n\to\infty$ for some $j_1<j_2<\cdots<j_{k+1}$, then $f_{\beta^\ell}(B)$ is not 
regular.  
\end{lemma}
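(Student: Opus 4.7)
The plan is to proceed by contradiction using a counting inequality comparing growth rates. Suppose $f_{\beta^\ell}(B)$ is regular. By Lemma~\ref{lem:rec}, $\Psi(f_{\beta^\ell}(B))$ decomposes as a finite union of linear sets of the form
\[
P_r = \mathbf{s}^{(r)} + \sum_{i \in I_r} \mathbb{N}\, t^{(r)}_i\, \mathbf{e}_i\qquad (t^{(r)}_i>0),
\]
corresponding to the regular expressions $a_1^{s_1}(a_1^{t_1})^* \cdots a_\ell^{s_\ell}(a_\ell^{t_\ell})^*$ describing the regular subsets of $\mathcal{B}_\ell$. The goal is to extract from the hypothesis a lower bound on $\#\Psi(f_{\beta^\ell}(B))$ that exceeds the upper bound forced by the $k$-dimensionality of $A$.

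For the lower bound, I would apply pigeonhole to pick a piece $P_{r_0}$ containing infinitely many $x^{(n)}$. For any $i \notin I_{r_0}$, the $i$-th coordinate is constant on $P_{r_0}$, so it cannot tend to infinity along a subsequence. Since $x^{(n)}_{j_s} \to \infty$ for $s=1,\ldots,k+1$, we must have $\{j_1,\ldots,j_{k+1}\} \subseteq I_{r_0}$, and in particular $|I_{r_0}| \ge k+1$. A direct count of lattice points shows that there is a constant $c_1>0$ with
\[
\#\{y \in P_{r_0} : \max_i y_i \le M\} \ge c_1 M^{k+1}
\]
for all sufficiently large $M$.

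For the matching upper bound I would exploit that $A$ lies in a $k$-dimensional affine $\mathbb{Q}$-subspace of $\mathbb{R}^\ell$, which by a standard lattice-point estimate gives $\#(A\cap[0,N]^\ell) = O(N^k)$, hence $\#\{w \in B : |w|\le N\} = O(N^k)$ via $|w|=\sum_i |w|_{a_i}$. Lemma~\ref{lem:long} then supplies a constant $C$ such that $|f_{\beta^\ell}(w)| \ge \beta|w| - C$ for all but finitely many $w$, so any $y \in \Psi(f_{\beta^\ell}(B))$ with $\max_i y_i \le M$ corresponds to a word of length at most $\ell M$ in $f_{\beta^\ell}(B)$, whose preimage under $f_{\beta^\ell}$ has length at most $(\ell M + C)/\beta$. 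Combining these two observations yields
\[
\#\{y \in \Psi(f_{\beta^\ell}(B)) : \max_i y_i \le M\} = O(M^k),
\]
and since $P_{r_0} \subseteq \Psi(f_{\beta^\ell}(B))$ this forces $c_1 M^{k+1} = O(M^k)$ for large $M$, a contradiction. The main delicate point I foresee is the upper bound on lattice points of a $k$-dimensional linear set inside a box: one must invoke that $A$ has exactly $k$ periods linearly independent over $\mathbb{Q}$ (rather than only over $\mathbb{Z}$) to legitimately project $A$ onto a $k$-dimensional coordinate subspace and bound integer points there, but this is essentially routine.
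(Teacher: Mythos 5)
Your proof is correct and follows essentially the same route as the paper: an upper bound of $O(n^k)$ on the counting function of $B$ (hence of $f_{\beta^\ell}(B)$ via Lemma~\ref{lem:long}), combined with the structure of regular subsets of $\mathcal{B}_\ell$ from Lemma~\ref{lem:rec} and a pigeonhole argument showing the sequence $x^{(n)}$ would force a $(k+1)$-dimensional piece. You merely make explicit the quantitative comparison ($c_1M^{k+1}$ versus $O(M^k)$) that the paper leaves implicit in the phrase ``the sequence $x^{(n)}$ cannot occur in such a finite union.''
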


\begin{proof}
Since $A$ is a $k$-dimensional linear subset of $\mathbb N^\ell$, we clearly 
have 
$$
\#\{w\in B:|w|\le n\}=\#\{x\in A:x_1+\cdots +x_\ell \le n\}=\Theta(n^k)
$$
and, by Lemma \ref{lem:long}, 
$\#\{w\in f_{\beta^\ell}(B):|w|\le n\}=\Theta(n^k)$.
Thus $f_{\beta^\ell}(B)$ is regular if and only if $\Psi(f_{\beta^\ell}(B))$ is 
a finite union of at most $k$-dimensional sets as in Lemma \ref{lem:rec}.
Since the sequence $x^{(n)}$ cannot occur in such a finite union, 
$f_{\beta^\ell}(B)$ is not regular. 
\end{proof}

The coefficients $c_{\ell-1}$ and $c_{\ell-2}$ (explicitely given in
Remark~\ref{rem:ccc}) are rational numbers. In the next two
propositions, we discuss the fact that these coefficients could be
integers and we rule out all the possible cases.

\begin{proposition}\label{pro:c1}
If $\frac{(\beta-1)(\ell+1)}2\not\in\mathbb Z$ or 
$\frac{(\beta^2-1)(\ell+1)}{24}\not\in\mathbb Z$ (and $\ell\ge3,\,\beta\ge2$), then
$f_{\beta^\ell}(a_\ell^*)$ is not regular.
\end{proposition}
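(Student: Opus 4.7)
The plan is to apply Lemma~\ref{lem:notregular} with $B = a_\ell^*$, whose Parikh image $A = \mathbb{N}\mathbf{e}_\ell$ is a $1$-dimensional linear subset of $\mathbb{N}^\ell$; with $k=1$, it suffices to produce a sequence $q_n \to \infty$ along which the Parikh vector $\Psi(f_{\beta^\ell}(a_\ell^{q_n}))$ has at least two components tending to infinity. Writing $\rep\ell(\beta^\ell\val\ell(a_\ell^q)) = a_1^{n_1}\cdots a_\ell^{n_\ell}$ and setting $z_j = n_{\ell-j+1}+\cdots+n_\ell$ (with $z_0 := 0$) as in Lemma~\ref{lem:val}, we have $n_i = z_{\ell-i+1}-z_{\ell-i}$, and the goal reduces to exhibiting two of the $z_j$'s growing at distinct linear rates in $q$. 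By Lemma~\ref{lem:constants}, the combinatorial expansion $\sum_{j=1}^\ell\binom{z_j+j-1}j$ of $\beta^\ell\val\ell(a_\ell^q)$ also equals $\sum_{j=1}^\ell\binom{\beta q+c_{j-1}+j-1}j$ as a polynomial identity in $q$.

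\emph{Case 1:} $c_{\ell-1} = \frac{(\beta-1)(\ell+1)}2 \notin \mathbb{Z}$. The analysis already performed in the proof of Lemma~\ref{lem:long} (subcase $c\notin\mathbb Z$) yields $z_\ell = \beta q + \lceil c_{\ell-1} \rceil$ and $z_{\ell-1} = d\beta q + o(q)$ with $d^{\ell-1}=1/2$, hence $d\in(0,1)$. Therefore $n_1 = z_\ell - z_{\ell-1} \sim (1-d)\beta q \to \infty$, while $z_{\ell-1} = n_2+\cdots+n_\ell\to\infty$, so pigeonhole forces some $n_j$ with $j\ge2$ to tend to infinity along a subsequence.

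\emph{Case 2:} $c_{\ell-1}\in\mathbb Z$ but $c_{\ell-2}\notin\mathbb Z$. Since $\binom{\beta q+c_{\ell-1}+\ell-1}\ell$ is now a genuine binomial coefficient, the subcase $c\in\mathbb Z$ of Lemma~\ref{lem:long} delivers $z_\ell = \beta q + c_{\ell-1}$ exactly. Subtracting this term from the identity above yields the residual equation
$$\binom{z_{\ell-1}+\ell-2}{\ell-1}+\cdots+\binom{z_1}1 = \binom{\beta q+c_{\ell-2}+\ell-2}{\ell-1}+\cdots+\binom{\beta q+c_0}1,$$
to which we apply the same asymptotic matching one dimension lower. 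Writing $z_{\ell-1} = \beta q+c'$ with $c'\in\mathbb Z$ and $z_{\ell-2} = d'\beta q+o(q)$ with $d'\in[0,1]$, comparison of the coefficients of $q^{\ell-2}$ on both sides produces the relation $c' + (d')^{\ell-2} = c_{\ell-2}+1$. Since $c'$ is an integer while $c_{\ell-2}$ is not, $(d')^{\ell-2}$ must coincide with the fractional part of $c_{\ell-2}$, which lies in $(0,1)$; hence $d'\in(0,1)$ (using $\ell-2\ge1$). It follows that $n_2 = z_{\ell-1}-z_{\ell-2}\sim(1-d')\beta q\to\infty$ and $z_{\ell-2} = n_3+\cdots+n_\ell\to\infty$, so pigeonhole yields some $n_j$ with $3\le j\le\ell$ tending to infinity along a subsequence.

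In either case two coordinates of $\Psi(f_{\beta^\ell}(a_\ell^{q_n}))$ grow without bound along a suitable subsequence, and Lemma~\ref{lem:notregular} then yields that $f_{\beta^\ell}(a_\ell^*)$ is not regular. The main technical obstacle is the recursive step in Case~2, where one must carefully rerun the second-order expansion of Lemma~\ref{lem:long} on the residual polynomial on the right-hand side of~(\ref{eq:constants}); this step requires $\ell\ge3$ so that the exponent $\ell-2$ of $d'$ is at least~$1$, which explains the appearance of this hypothesis in the statement.
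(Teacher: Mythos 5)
Your proof is correct and follows essentially the same route as the paper's: both cases rest on the asymptotics $z_\ell=\beta q+\lceil c_{\ell-1}\rceil$, $z_{\ell-1}=d\beta q+o(q)$ with $d^{\ell-1}=1/2$ when $c_{\ell-1}\notin\mathbb Z$, and on the relation $z_{\ell-1}=\beta q+c_{\ell-2}+1-d^{\ell-2}$ forcing $0<d<1$ when $c_{\ell-1}\in\mathbb Z$ but $c_{\ell-2}\notin\mathbb Z$, followed by an appeal to Lemma~\ref{lem:notregular}. Your explicit pigeonhole step (extracting a single coordinate $j\ge2$, resp.\ $j\ge3$, tending to infinity along a subsequence) is left implicit in the paper but is exactly the intended argument.
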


\begin{proof}
We use notation of the proof of Lemma~\ref{lem:long}.

{\bf First case} : $c_{\ell-1}=\frac{(\beta-1)(\ell+1)}2\not\in\mathbb Z$ \\ 
We have $z_\ell=\beta q+c_{\ell-1}+1/2$, 
$z_{\ell-1}=2^{-1/(\ell-1)}\beta q+o(q)$, hence
\begin{align*}
|f_{\beta^\ell}(a_\ell^q)|_{a_1} & =
(1-2^{-1/(\ell-1)})\beta q+o(q), \\
\sum_{j=2}^\ell| f_{\beta^\ell}(a_\ell^q)|_{a_j}
& = 2^{-1/(\ell-1)}\beta q+o(q),
\end{align*}
and $f_{\beta^\ell}(a_\ell^*)$ is not regular 
by Lemma~\ref{lem:notregular}. 

{\bf Second case} : $c_{\ell-1}=\frac{(\beta-1)(\ell+1)}2\in\mathbb Z$\\
We have $z_\ell=\beta q+c_{\ell-1}$, 
$z_{\ell-1}=\beta q+\mathcal O(1)$ and $z_{\ell-2}=d\beta q+o(q)$ with 
$0\le d\le 1$.
By comparing the coefficients of $q^{\ell-2}$, we obtain
$$
z_{\ell-1}=\beta q+c_{\ell-2}+1-d^{\ell-2}
$$
Since in this case $c_{\ell-2}=\frac{(\beta-1)(\ell+1)}2-
\frac{(\beta^2-1)(\ell+1)}{24}\not\in\mathbb Z$, we have $0<d<1$, hence
$$
|f_{\beta^\ell}(a_\ell^q)|_{a_2}=
(1-d)\beta q+o(q),\quad
\sum_{j=3}^\ell|f_{\beta^\ell}(a_\ell^q)|_{a_j}
=d\beta q+o(q),
$$
and $f_{\beta^\ell}(a_\ell^*)$ is not regular 
by Lemma~\ref{lem:notregular}.
\end{proof}

\begin{proposition}\label{pro:c2}
If $\frac{(\beta-1)(\ell+1)}2\in\mathbb Z$ and $\frac{(\beta^2-1)(\ell+1)}{24}\in\mathbb Z$ (and $\ell\ge3,\,\beta\ge2$), then 
$f_{\beta^\ell}(a_1^*a_\ell^*)$ is not regular.
\end{proposition}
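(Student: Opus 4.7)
The plan is to apply Lemma~\ref{lem:notregular} to the two-dimensional linear set $\Psi(a_1^*a_\ell^*)=\{p\mathbf e_1+q\mathbf e_\ell:p,q\ge0\}\subseteq\mathbb{N}^\ell$; it therefore suffices to exhibit a sequence $(p_n,q_n)$ in $\mathbb{N}^2$ with $p_n,q_n\to\infty$ such that the Parikh vector $\Psi(f_{\beta^\ell}(a_1^{p_n}a_\ell^{q_n}))$ has at least three coordinates tending to infinity. I would fix a positive integer $\alpha$ satisfying $(1+\alpha)^{\ell-1}>\beta$, set $p_n=\alpha q_n$, and track the three largest coefficients $z_\ell>z_{\ell-1}>z_{\ell-2}$ of the combinatorial decomposition of $M_n:=\beta^\ell\val\ell(a_1^{p_n}a_\ell^{q_n})$ from Corollary~\ref{cor:decomp}.

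By Lemma~\ref{lem:val}, $M_n$ is a polynomial in $q_n$ with leading term $(\beta(1+\alpha))^\ell q_n^\ell/\ell!$. Running Algorithm~1 and matching coefficients via the Stirling expansion of $\binom{z_\ell}\ell$ (as in Lemma~\ref{lem:constants}) gives $z_\ell=\beta(1+\alpha)q_n+A_\ell+o(1)$ with $A_\ell=\bigl\lfloor\tfrac{(\beta+1)(\ell-1)}2+\tfrac\beta{(1+\alpha)^{\ell-1}}\bigr\rfloor$; a parity check using the hypothesis $\tfrac{(\beta-1)(\ell+1)}2\in\mathbb{Z}$ shows that $\tfrac{(\beta+1)(\ell-1)}2\in\mathbb{Z}$ too, so our choice of $\alpha$ makes $A_\ell=\tfrac{(\beta+1)(\ell-1)}2$ exactly. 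The residue $M_n-\binom{z_\ell}\ell$ then has leading term $\tfrac{\beta^\ell}{(\ell-1)!}q_n^{\ell-1}$, and a second iteration of Algorithm~1 yields $z_{\ell-1}=\gamma_1 q_n+O(1)$ with $\gamma_1=\beta^{\ell/(\ell-1)}$ (irrational whenever $\beta$ is not an $(\ell-1)$-th power). A third iteration on the new degree-$(\ell-2)$ residue produces $z_{\ell-2}=\gamma_2 q_n+O(1)$ for an algebraic $\gamma_2$ depending on $\alpha$, computable in the same spirit as the formula for $c_{\ell-2}$ in Remark~\ref{rem:ccc}.

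When $\alpha$ is chosen so that $\beta(1+\alpha)>\gamma_1>\gamma_2>0$ are pairwise distinct, the decoding $n_j=z_{\ell-j+1}-z_{\ell-j}-1$ (with $n_\ell=z_1$) gives
\begin{align*}
n_1&\sim(\beta(1+\alpha)-\gamma_1)\,q_n\to\infty,\\
n_2&\sim(\gamma_1-\gamma_2)\,q_n\to\infty,\\
n_3+\cdots+n_\ell&=z_{\ell-2}-(\ell-3)\sim\gamma_2\,q_n\to\infty,
\end{align*}
so the pigeonhole principle produces some $j\in\{3,\ldots,\ell\}$ with $n_j\to\infty$ along a subsequence. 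Three coordinates of $\Psi(f_{\beta^\ell}(a_1^{p_n}a_\ell^{q_n}))$ then tend to infinity, and Lemma~\ref{lem:notregular} (with $k=2$) gives the required non-regularity.

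The main obstacle is verifying that a suitable $\alpha$ always exists. Since $\gamma_1=\beta^{\ell/(\ell-1)}$ does not depend on $\alpha$, only finitely many $\alpha$ can make $\beta(1+\alpha)=\gamma_1$; and $\gamma_2$ is an algebraic function of $\alpha$, so generic $\alpha$ avoids $\gamma_2\in\{0,\gamma_1,\beta(1+\alpha)\}$. One must also check that the second integrality hypothesis $\tfrac{(\beta^2-1)(\ell+1)}{24}\in\mathbb{Z}$ rules out the residual level-two degeneracy — parallel to the role played by $c_{\ell-2}$ in the proof of Proposition~\ref{pro:c1} — so that a valid $\alpha$ is guaranteed in every case.
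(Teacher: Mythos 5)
Your high-level strategy coincides with the paper's: apply Lemma~\ref{lem:notregular} with $k=2$ to $B=a_1^*a_\ell^*$ and exhibit images whose Parikh vectors have three unbounded coordinates. The gap lies in your choice of witness sequence and the asymptotics you derive from it. With $p_n=\alpha q_n$, the entire contribution of $p$ to the leading binomial coefficient is absorbed into the integer part of $z_\ell$ (and the constant there is $\tfrac{(\beta-1)(\ell-1)}2$, not $\tfrac{(\beta+1)(\ell-1)}2$), so the residue after the first step is $\tfrac{\beta^\ell}{(\ell-1)!}q_n^{\ell-1}+\mathcal O(q_n^{\ell-2})$ \emph{independently of $\alpha$}, and $z_{\ell-1}=\beta^{\ell/(\ell-1)}q_n+\mathcal O(1)$ no matter how $\alpha$ is chosen. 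This creates two problems. First, when $\gamma_1=\beta^{\ell/(\ell-1)}$ is irrational, the fractional part of the ideal value of $z_{\ell-1}$ oscillates, so the next residue is not asymptotic to a fixed multiple of $q_n^{\ell-2}$: your claim $z_{\ell-2}=\gamma_2 q_n+\mathcal O(1)$ is false; $z_{\ell-2}/z_{\ell-1}$ sweeps essentially all of $[0,1]$, and you would need an equidistribution argument to extract a subsequence on which $n_2$ and $n_3+\cdots+n_\ell$ are simultaneously large. Second, and more seriously, when $\beta$ is a perfect $(\ell-1)$-th power (e.g.\ $\ell=3$, $\beta=25$, which satisfies both integrality hypotheses), $\gamma_1$ is an integer, the oscillation disappears, and you have no control over $z_{\ell-2}$; since $\gamma_1$ does not depend on $\alpha$, ``choosing $\alpha$ generically'' cannot repair this case, and the quantity $\gamma_2$ whose nondegeneracy your argument needs is never computed. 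You flag this as the main obstacle yourself; it is precisely where the proof is missing.

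The paper sidesteps all of this by decoupling the two parameters: it takes $q=p^3$, starts from the exact identity of Lemma~\ref{lem:constants} at $p=0$, and matches coefficients of $q^ip^j$ to show that the perturbation enters the second coefficient linearly and the third quadratically, namely $z_{\ell-1}=\beta q-(\beta-1)\beta p+c_{\ell-2}$ and $z_{\ell-2}=\beta q-\tfrac{(\beta-1)\beta}2(\beta p)^2+\mathcal O(p)$. Consequently $|f_{\beta^\ell}(a_1^pa_\ell^q)|_{a_1}=\beta^2p+\mathcal O(1)$, $|f_{\beta^\ell}(a_1^pa_\ell^q)|_{a_2}=\tfrac{(\beta-1)\beta^3}2p^2+\mathcal O(p)$ and $\sum_{j\ge3}|f_{\beta^\ell}(a_1^pa_\ell^q)|_{a_j}=\beta q+\mathcal O(p^2)$ live on three separated scales and all tend to infinity, uniformly in $\beta$ and $\ell$, with no case distinction and no equidistribution. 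To salvage your route you would have to supply the Weyl-type argument in the irrational case \emph{and} a separate analysis when $\beta^{1/(\ell-1)}\in\mathbb N$; it is simpler to let $p_n$ and $q_n$ grow at different rates as the paper does.
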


\begin{proof}
If we choose $q$ large enough with respect to $p$, e.g. $q=p^3$, then we have
\begin{multline*}
\beta^\ell\left(\binom{p+q+\ell-1}{\ell}+\binom{q+\ell-2}{\ell-1}+
\binom{q+\ell-3}{\ell-2}+\cdots+\binom q1\right) \\
= \binom{\beta(p+q)+c_{\ell-1}+\ell-1}{\ell}+
\binom{\beta q-(\beta-1)\beta p+c_{\ell-2}+\ell-2}{\ell-1} \\
+\binom{\beta q-\frac{(\beta-1)\beta}2(\beta p)^2+\mathcal O(p)}{\ell-2}+
\mathcal O\big(q^{\ell-3}\big).
\end{multline*}
Indeed, this equation holds for $p=0$ by Lemma~\ref{lem:constants}.
Therefore the coefficients of $q^\ell p^0$, $q^{\ell-1}p^0$ and $q^{\ell-2}p^0$
on the left-hand side are equal to those on the right-hand side.
It is easy to see that the same holds for $q^{\ell-1}p^1$, $q^{\ell-2}p^2$ and 
$q^{\ell-3}p^3$.
For $q^{\ell-2}p^1$ and $q^{\ell-3}p^2$, consider the following equations:
\begin{align*}
(\ell-2)!\,\beta^{1-\ell}\big[q^{\ell-2}p^1\big]:\quad & 
\beta\frac{\ell-1}2=c_{\ell-1}+\frac{\ell-1}2-(\beta-1), \\
(\ell-3)!\,\beta^{1-\ell}\big[q^{\ell-3}p^2\big]:\quad &
\beta\frac{\ell-1}4=\frac{c_{\ell-1}}2+\frac{\ell-1}4+\frac{(\beta-1)^2}2-
\frac{(\beta-1)\beta}2. 
\end{align*}
If the $\mathcal O(p)$ term is chosen properly, then the coefficient of 
$q^{\ell-3}p^1$ vanishes as well and $\mathcal O\big(q^{\ell-3}\big)$ remains.
Since $c_\ell,c_{\ell-1}\in\mathbb Z$, we have thus 
\begin{gather*}
|f_{\beta^\ell}(a_1^pa_\ell^q)|_{a_1}=
\beta^2p+\mathcal O(1), \\
|f_{\beta^\ell}(a_1^pa_\ell^q)|_{a_2}=
\frac{(\beta-1)\beta^3}2p^2+\mathcal O(p), \\ 
\sum_{j=3}^\ell |f_{\beta^\ell}(a_1^pa_\ell^q)|_{a_j}
=\beta q+\mathcal O(p^2),
\end{gather*}
and $f_{\beta^\ell}(a_1^*a_\ell^*))$ is not regular 
by Lemma~\ref{lem:notregular}.
\end{proof}

\begin{example}
    We just illustrate some of the above computations. If $\ell=3$,
    then we have $c_2=2(\beta-1)$, $c_1=2(\beta-1)-(\beta^2-1)/6$ and
$$
c_0=-\frac{c_1}2-\frac{c_1^2}2-\frac{c_2}3-\frac{c_2^2}2-\frac{c_2^3}6=
-\frac{(\beta^2-1)^2}{72}-(\beta^3-1)-\frac{\beta^2-1}4+2(\beta-1).
$$
If $\beta\equiv\pm1\pmod 6$, then this gives
$$
f_{\beta^3}(a_3^q)=a_1^{\frac{\beta^2-1}6}
a_2^{\frac{(\beta^2-1)^2}{72}+\beta^3-1+\frac{\beta^2-1}{12}}
a_3^{\beta q-\frac{(\beta^2-1)^2}{72}-(\beta^3-1)-\frac{\beta^2-1}4+2(\beta-1)}.
$$
In particular, this latter formula shows that $a_3^*$ cannot be
used to prove that multiplication by $\beta^3$ does not preserve
recognizability when $\beta\equiv\pm1\pmod 6$. Thanks to
Proposition~\ref{pro:c1}, $f_{\beta^3}(a_3^q)$ is regular if and
only if $\beta\equiv\pm1\pmod 6$.

Otherwise, i.e., if $1-\beta^2\equiv j\pmod 6$ with $j\in\{1,3,4\}$, then $z_3=\beta q+c_2$,
$z_2=\beta q+c_1+1-j/6$ and
$$
z_1=\frac j6\beta q+c_0-\frac{(1-j/6)^2}2-(1-j/6)c_1-\frac{1-j/6}2.
$$
\end{example}

If we collect results from Theorems \ref{the:6}, \ref{the:slender},
\ref{the:LR} and Propositions \ref{pro:c1} and \ref{pro:c2}, we obtain
the main result about multiplication by a constant.

\begin{theorem}
Let $\ell,\,\lambda$ be positive integers. 
For the abstract numeration system
    $$S=(a_1^*\cdots a_\ell^*,\{a_1<\cdots<a_\ell\}),$$ 
    multiplication by $\lambda\ge 2$ preserves
    $S$-recognizability if and only if one of the following condition is satisfied :
    \begin{itemize}
        \item $\ell=1$
        \item $\ell=2$ and $\lambda$ is an odd square.
    \end{itemize}
\end{theorem}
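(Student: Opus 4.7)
The plan is to synthesize the preceding results via a clean case analysis on the alphabet size $\ell$. The technical work has been done in Theorem \ref{the:6}, Theorem \ref{the:slender}, Theorem \ref{the:LR}, and Propositions \ref{pro:c1} and \ref{pro:c2}; my job is to verify that these cover every pair $(\ell,\lambda)$ with $\lambda\ge2$ and to read off the precise characterisation.

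For $\ell=1$, the language $a_1^*$ is $1$-slender, so the corollary following Theorem \ref{the:slender} gives preservation for every $\lambda$. For $\ell=2$, Lemma \ref{lem:11} gives $\mathbf{u}_2(n)=n+1=\Theta(n)$, so Theorem \ref{the:6} forces $\lambda=\beta^2$; Theorem \ref{the:LR} then settles both directions, yielding that the admissible $\lambda$ are exactly the odd squares.

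The bulk of the argument is the case $\ell\ge3$, where I must show that no $\lambda\ge2$ preserves recognizability. Lemma \ref{lem:11} gives $\mathbf{u}_\ell(n)=\Theta(n^{\ell-1})$, so Theorem \ref{the:6} already rules out every $\lambda$ that is not of the form $\beta^\ell$. For $\lambda=\beta^\ell$ with $\beta\ge2$, I will exhibit an explicit $\mathcal B_\ell$-recognizable witness $X$ such that $\lambda X$ is not $\mathcal B_\ell$-recognizable, splitting on the integrality of the two rational numbers $c_{\ell-1}=\frac{(\beta-1)(\ell+1)}{2}$ and $\frac{(\beta^2-1)(\ell+1)}{24}$ that appear in Remark \ref{rem:ccc}. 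If at least one of them fails to be an integer, Proposition \ref{pro:c1} asserts that $f_{\beta^\ell}(a_\ell^*)$ is not regular, so $X=\val{\ell}(a_\ell^*)$ serves. Otherwise both are integers, Proposition \ref{pro:c2} asserts that $f_{\beta^\ell}(a_1^*a_\ell^*)$ is not regular, and $X=\val{\ell}(a_1^*a_\ell^*)$ serves.

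The main point requiring care is the exhaustiveness of the $\ell\ge3$ split: the hypotheses of Propositions \ref{pro:c1} and \ref{pro:c2} form a tautological dichotomy in $\beta\ge2$ (either at least one of the two rationals is non-integer, or both are integer), so together they handle every remaining pair $(\ell,\beta)$ with $\ell\ge3$ and $\beta\ge2$. No additional analytic work is required at this stage: the heavy combinatorial and asymptotic content — the length estimate of Lemma \ref{lem:long}, the explicit expansion of Lemma \ref{lem:constants}, and the counting criterion of Lemma \ref{lem:notregular} — has already been absorbed into the propositions, so the final theorem reduces to bookkeeping.
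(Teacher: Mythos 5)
Your proposal is correct and follows essentially the same route as the paper: $\ell=1$ via the slender-language corollary, $\ell=2$ via Theorems \ref{the:6} and \ref{the:LR}, and $\ell\ge3$ via Theorem \ref{the:6} to reduce to $\lambda=\beta^\ell$ followed by the integrality dichotomy of Propositions \ref{pro:c1} and \ref{pro:c2}. Your explicit check that the hypotheses of the two propositions exhaust all $(\ell,\beta)$ with $\ell\ge3$, $\beta\ge2$ is a welcome detail the paper leaves implicit.
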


\begin{proof}
    The case $\ell=1$ is ruled out by Theorem \ref{the:slender}, the
    case $\ell=2$ is given by Theorem \ref{the:LR}. Consider $\ell\ge
    3$. Thanks to Theorem \ref{the:6}, it suffices to consider
    $\lambda$ of the $\beta^\ell$ and the conclusion follows from
    Propositions \ref{pro:c1} and \ref{pro:c2}.
\end{proof}

 \vskip 30pt

\section*{\normalsize 6. Structural properties of $\mathcal{B}_\ell$ seen through $f_{\beta_\ell}$}

In this independent section, we inspect closely how a word is
transformed when applying $f_{\beta^\ell}$. To that end,
$\mathcal{B}_\ell$ (or equivalently $\mathbb{N}$) is partitioned into
regions where $f_{\beta^\ell}$ acts differently. Thanks to our
discussion, we are able to detect some kind of pattern occurring
periodically within these regions. To have a flavor of the
computations involved in this section, the reader could first have a
look at Example~\ref{ex:par}. 
According to Lemma \ref{lem:long}, we define a partition of 
$\mathbb{N}$.

\begin{definition}
    For all $i\in\{0,1,\ldots,\beta\}$ and $k\in\mathbb{N}$ large 
    enough, we define
    $$\mathcal{R}_{i,k}:=\left\{n\in\mathbb{N}: |\rep{\ell}(n)|=k
    \text{ and }|\rep{\ell}(\beta^\ell n)|=\beta\,
    k+\left\lceil\frac{(\beta-1)(\ell+1)}{2}\right\rceil-i\right\}.$$
\end{definition}

\begin{lemma}\label{lem:congrumod}
    If $\beta=\prod_{i=1}^k p_i^{\theta_i}$ where $p_1,\ldots,p_k$ are
    prime numbers greater than $\ell$ and the $\theta_i$'s are 
    positive integers, then for any $u\ge\ell$, we have
    $$\binom{u}{\ell}\equiv\binom{u+\beta^\ell}{\ell}
    \pmod{\beta^\ell}.$$
\end{lemma}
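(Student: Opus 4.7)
The plan is to reduce the congruence to a divisibility statement about the individual binomials $\binom{\beta^\ell}{k}$ via Vandermonde's identity, and then exploit the coprimality of $\beta$ with $\ell!$.

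First I would write, by Vandermonde,
$$
\binom{u+\beta^\ell}{\ell}=\sum_{k=0}^\ell \binom{\beta^\ell}{k}\binom{u}{\ell-k},
$$
so that
$$
\binom{u+\beta^\ell}{\ell}-\binom{u}{\ell}=\sum_{k=1}^\ell \binom{\beta^\ell}{k}\binom{u}{\ell-k}.
$$
Since $u\ge\ell$, each $\binom{u}{\ell-k}$ is an integer, so it suffices to show that $\binom{\beta^\ell}{k}$ is divisible by $\beta^\ell$ for every $k\in\{1,\ldots,\ell\}$.

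Next I would use the elementary identity $k\binom{\beta^\ell}{k}=\beta^\ell\binom{\beta^\ell-1}{k-1}$. The hypothesis that every prime divisor $p_i$ of $\beta$ satisfies $p_i>\ell\ge k$ guarantees $\gcd(k,\beta^\ell)=1$. Consequently $k$ divides $\binom{\beta^\ell-1}{k-1}$, and so
$$
\binom{\beta^\ell}{k}=\frac{\beta^\ell}{k}\binom{\beta^\ell-1}{k-1}
$$
is an integer multiple of $\beta^\ell$. Plugging back into the Vandermonde expansion gives the required congruence modulo $\beta^\ell$.

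There is no real obstacle: the only subtlety is noticing that the coprimality hypothesis on the prime factors of $\beta$ is precisely what is needed to clear the denominator $k$ (equivalently, the denominator $\ell!$ appearing if one expands $\binom{u+\beta^\ell}{\ell}-\binom{u}{\ell}$ as a polynomial in $u$). I expect the proof to fit in three or four lines once the Vandermonde decomposition is written down.
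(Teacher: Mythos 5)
Your proof is correct, but it takes a genuinely different route from the paper's. The paper writes the difference as $\frac{P(v)-P(u)}{\ell!}$ with $P(x)=x(x-1)\cdots(x-\ell+1)$ and $v=u+\beta^\ell$, observes that the integer numerator $P(v)-P(u)$ is divisible both by $\ell!$ (because the quotient is a difference of two binomial coefficients, hence an integer) and by $v-u=\beta^\ell$ (because $a-b$ divides $P(a)-P(b)$ for any polynomial with integer coefficients), and then uses $\gcd(\ell!,\beta^\ell)=1$ to conclude that the numerator is divisible by $\ell!\,\beta^\ell$. You instead expand via Vandermonde and reduce everything to the single divisibility $\beta^\ell\mid\binom{\beta^\ell}{k}$ for $1\le k\le\ell$, which you obtain from the absorption identity $k\binom{\beta^\ell}{k}=\beta^\ell\binom{\beta^\ell-1}{k-1}$ together with $\gcd(k,\beta^\ell)=1$. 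Both arguments hinge on exactly the same arithmetic hypothesis (no prime factor of $\beta$ is $\le\ell$), applied once to clear the denominator $\ell!$ in the paper and once to clear each $k\le\ell$ in your version. Your route has the small advantage of isolating a clean, reusable intermediate fact, namely $\binom{\beta^\ell}{k}\equiv 0\pmod{\beta^\ell}$ for $1\le k\le\ell$, and it delivers for free the stronger conclusion that $\binom{u+\beta^\ell}{j}\equiv\binom{u}{j}\pmod{\beta^\ell}$ simultaneously for all $j\le\ell$; the paper's argument is marginally more elementary, needing nothing beyond polynomial arithmetic.
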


\begin{proof}
    Let $u,v\ge\ell$. One has
    $$\binom{v}{\ell}-\binom{u}{\ell}
    =\frac{v(v-1)\cdots(v-\ell+1)-u(u-1)\cdots(u-\ell+1)}{\ell!}.$$
    The numerator on the r.h.s. is an integer divisible by $\ell!$.
    Moreover, this numerator is also clearly divisible by $v-u$
    (indeed, it is of the form $P(v)-P(u)$ for some polynomial $P$).\\
    Notice that for $v=u+\beta^\ell$, the corresponding numerator is
    divisible by $\ell!$ and also by $\beta^\ell$. But since any prime
    factor of $\beta$ is larger than $\ell$, $\ell!$ and $\beta^\ell$
    are relatively prime. Consequently, the corresponding numerator is
    divisible by $\beta^\ell \ell!$.
\end{proof}

An inspection of multiplication by
$\beta^\ell$ using the partition induced by Lemma~\ref{lem:long}
provides us with the following observation.

\begin{proposition}\label{pro:min}
    Let $m_{i,k}=\min\mathcal{R}_{i,k}$ for $k\ge 0$ and 
    $i\in\{0,\ldots,\beta\}$.
    If $\beta$ satisfies the condition of Lemma~\ref{lem:congrumod}, 
    then 
    $$
    |\rep{\ell}(\beta^\ell m_{i,k})|_{a_j}=
    |\rep{\ell}(\beta^\ell m_{i,k+\beta^{\ell-1}})|_{a_j}
    $$
    for all $k$ large enough and $j\in\{2,\ldots,\ell\}$.
    Furthermore, 
    $$
    |\rep{\ell}(\beta^\ell m_{i,k+\beta^{\ell-1}})|_{a_1}
    =|\rep{\ell}(\beta^\ell m_{i,k})|_{a_1}+\beta^\ell.
    $$
    If $i<\beta$, then $m_{i,k}=\lceil C_i(k)/\beta^\ell\rceil$ with
    $$
    C_i(k)=
\mathrm{val}_\ell\left(a_1^{\beta\,k+\frac{(\beta-1)(\ell+1)}2-i}\right)
    =\binom{\beta\,k+\frac{(\beta-1)(\ell+1)}{2}-i+\ell-1}{\ell}.  
    $$
\end{proposition}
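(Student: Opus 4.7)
The plan is to treat part 3 first, establishing the closed form for $m_{i,k}$, and then to derive parts 1 and 2 from it via Lemma~\ref{lem:congrumod} combined with a structural property of $\val{\ell}$ extracted from Lemma~\ref{lem:val}. Observe first that the hypothesis on $\beta$ forces $\beta$ to be odd (for $\ell\ge 2$), so $c := (\beta-1)(\ell+1)/2$ is an integer and no ceilings are needed below.

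For part 3, set $N = \beta k + c - i$. By Lemma~\ref{lem:val}, an integer $n$ lies in $\mathcal{R}_{i,k}$ iff $\val{\ell}(a_1^k) \le n \le \val{\ell}(a_\ell^k)$ and $C_i(k) = \val{\ell}(a_1^N) \le \beta^\ell n < \val{\ell}(a_1^{N+1}) = C_i(k) + \binom{N+\ell-1}{\ell-1}$. The smallest $n$ satisfying the lower inequality is precisely $\lceil C_i(k)/\beta^\ell\rceil$; the upper inequality is automatic for large $k$, since $\beta^\ell n - C_i(k) < \beta^\ell$ while $\binom{N+\ell-1}{\ell-1} = \Theta(k^{\ell-1})$. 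The delicate check is that this candidate falls into the correct length bracket: a direct expansion in powers of $k$ gives
\begin{equation*}
\frac{C_i(k)}{\beta^\ell} - \val{\ell}(a_1^k) = \frac{k^{\ell-1}}{\beta\,(\ell-1)!}\Big(c - i - \frac{(\beta-1)(\ell-1)}2\Big) + \mathcal O(k^{\ell-2}),
\end{equation*}
and the bracketed factor equals $(\beta-1) - i$, which is nonnegative exactly when $i \le \beta - 1$, i.e., when $i < \beta$.

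For parts 1 and 2, apply Lemma~\ref{lem:congrumod} to $C_i(k) = \binom{\beta k + c - i + \ell - 1}{\ell}$: replacing $k$ by $k + \beta^{\ell-1}$ shifts the upper index by $\beta^\ell$, hence $C_i(k+\beta^{\ell-1}) \equiv C_i(k) \pmod{\beta^\ell}$. Writing $\beta^\ell m_{i,k} = C_i(k) + r$ with $r \in \{0, \ldots, \beta^\ell - 1\}$, the same remainder $r$ appears in $\beta^\ell m_{i,k+\beta^{\ell-1}} = C_i(k+\beta^{\ell-1}) + r$. Therefore $\beta^\ell m_{i,k}$ is the $(r+1)$-st word of length $N$ in $\mathcal{B}_\ell$ (in genealogical order), while $\beta^\ell m_{i,k+\beta^{\ell-1}}$ is the $(r+1)$-st word of length $N + \beta^\ell$. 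To conclude, I extract from Lemma~\ref{lem:val} the identity
\begin{equation*}
\val{\ell}(a_1^{n_1}\cdots a_\ell^{n_\ell}) - \val{\ell}(a_1^{n_1+\cdots+n_\ell}) = \sum_{j=2}^\ell \binom{n_j + \cdots + n_\ell + \ell - j}{\ell - j + 1},
\end{equation*}
whose right-hand side depends only on $(n_2, \ldots, n_\ell)$. Thus the rank of a length-$M$ word among all length-$M$ words of $\mathcal{B}_\ell$ is determined by its last $\ell - 1$ exponents, independently of $M$. Consequently, the $(r+1)$-st word of length $N+\beta^\ell$ is obtained from the $(r+1)$-st word of length $N$ by augmenting $n_1$ by $\beta^\ell$ while leaving $(n_2, \ldots, n_\ell)$ unchanged, which is exactly the content of parts 1 and 2.

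The main obstacle is the length-bracket verification in part 3: the hypothesis $i < \beta$ enters through the asymptotic comparison that becomes tight precisely at $i = \beta - 1$, where the leading term in the displayed expansion vanishes and one must inspect the next order (in the spirit of the $c_{\ell-2}$ computation in Remark~\ref{rem:ccc}) to confirm that the ceiling still lands in the correct length range.
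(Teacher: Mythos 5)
Your treatment of parts 1 and 2 is essentially the paper's own argument: both use Lemma~\ref{lem:congrumod} to get $C_i(k)\equiv C_i(k+\beta^{\ell-1})\pmod{\beta^\ell}$, conclude that the offset $r=\beta^\ell m_{i,k}-C_i(k)\in[0,\beta^\ell)$ is the same for $k$ and $k+\beta^{\ell-1}$, and then invoke the structure of $\val{\ell}$ (equation \eqref{eq:decompp}) to see that a bounded offset from $\val{\ell}(a_1^N)$ pins down the $a_2,\ldots,a_\ell$ exponents independently of $N$, the length increase $\beta\cdot\beta^{\ell-1}=\beta^\ell$ being absorbed by $a_1$. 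Where you genuinely diverge is part 3. The paper never expands anything asymptotically there: it sandwiches $\beta^\ell m_{i,k}$ from above by observing that $m_{i,k}-1$ lies in $\mathcal{R}_{i+1,k}$ (minimality plus monotonicity of $n\mapsto|\rep{\ell}(\beta^\ell n)|$ force $|\rep{\ell}(\beta^\ell(m_{i,k}-1))|<N$, hence $\beta^\ell(m_{i,k}-1)<C_i(k)$), which combined with $\beta^\ell m_{i,k}\ge C_i(k)$ gives $m_{i,k}=\lceil C_i(k)/\beta^\ell\rceil$ in two lines. You instead verify directly that the candidate $\lceil C_i(k)/\beta^\ell\rceil$ lands in the length-$k$ bracket. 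Your route is more computational but buys something the paper's argument tacitly assumes, namely the non-emptiness of $\mathcal{R}_{i,k}$ for $i<\beta$ and $k$ large.

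The cost of your route is the unresolved case $i=\beta-1$, and as written this is a real hole: you correctly compute that the leading coefficient of $C_i(k)/\beta^\ell-\val{\ell}(a_1^k)$ is proportional to $(\beta-1)-i$, which vanishes exactly at $i=\beta-1$, and you then only assert that ``one must inspect the next order.'' You should carry that out. It does work: Remark~\ref{rem:ccc} gives
$$
\binom{\beta q+c+\ell}{\ell}-\beta^\ell\binom{q+\ell}{\ell}
=\frac{c(\beta+1)}{12}\,\frac{(\beta q)^{\ell-2}}{(\ell-2)!}+\mathcal O\big(q^{\ell-3}\big),
$$
and taking $q=k-1$ the left-hand side is exactly $C_{\beta-1}(k)-\beta^\ell\val{\ell}(a_1^k)$; since $c=(\beta-1)(\ell+1)/2>0$ for $\beta\ge2$, this is a positive quantity of order $k^{\ell-2}$, so the ceiling does exceed $\val{\ell}(a_1^k)$ for large $k$. (You should also record the easy upper check $\lceil C_0(k)/\beta^\ell\rceil\le\val{\ell}(a_\ell^k)$, which follows from $(\beta-1)/\beta<1$ at leading order.) With that computation inserted, your proof is complete for $i<\beta$; like the paper's, it leaves the $i=\beta$ instance of parts 1 and 2 essentially untreated, since the formula $m_{\beta,k}=\val{\ell}(a_1^k)$ does not place $\beta^\ell m_{\beta,k}$ within $\beta^\ell$ of the first word of its length.
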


\begin{proof}
    For $i=\beta$, we clearly have 
    $m_{\beta,k}=\mathrm{val}_\ell(a_1^k)$ if $\mathcal R_{\beta,k}$ is 
    non-empty, and it is easily verified that $\mathcal R_{\beta,k}$ is
    non-empty if $k$ is large enough (and $\ell\ge2$). \\
    For $i<\beta$, note first that $(\beta-1)(\ell+1)$ is even since 
    $\beta$ satisfies the condition of Lemma~\ref{lem:congrumod}.
    Thus we have
    $$
    C_i(k)\le \beta^\ell m_{i,k} < C_{i-1}(k)
    $$
    Since $m_{i,k}-1\in\mathcal{R}_{i+1,k}$, we also obtain
    $$
    C_{i+1}(k)+\beta^\ell\le \beta^\ell m_{i,k}<C_i(k)+\beta^\ell.
    $$
Therefore $m_{i,k}=\lceil C_i(k)/\beta^\ell\rceil$ and there exists a 
unique integer $\mu_i(k)$ such that
$$
\beta^\ell m_{i,k} = C_i(k)+\mu_i(k) \quad\text{ and }\quad 0\le 
\mu_i(k)<\beta^\ell.
$$
In particular, there exists also a unique integer 
$\mu_i(k+\beta^{\ell-1})$ such that 
$$
\beta^\ell m_{i,k+\beta^{\ell-1}} = C_i(k+\beta^{\ell-1})+
\mu_i(k+\beta^{\ell-1})
\quad\text{ and }\quad 0\le \mu_i(k+\beta^{\ell-1})<\beta^\ell.$$
From Lemma~\ref{lem:congrumod}, we deduce that $C_i(k)\equiv
C_i(k+\beta^{\ell-1})\pmod{\beta^\ell}$ and consequently,
$\mu_i(k)=\mu_i(k+\beta^{\ell-1})$. 
From Lemma~\ref{lem:val}, we deduce that
$$
\mathrm{rep}_{\ell}(\beta^\ell m_{i,k})=a_1^{t}\, 
\rep{\{a_2,\ldots,a_\ell\}}(\mu_i(k)),
$$
where $t$ is such that 
$|\rep{\ell}(\beta^\ell m_{i,k})|=\beta\,k+\frac{(\beta-1)(\ell+1)}2-i$,
and
$$
\hskip42mm
\rep{\ell}(\beta^\ell m_{i,k+\beta^{\ell-1}})=a_1^{t+\beta^\ell}\, 
\rep{\{a_2,\ldots,a_\ell\}}(\mu_i(k)).\hskip37mm \qedhere
$$
\end{proof}

\begin{remark}
    In the previous proposition, we were interested in the first word
    in $\mathcal{R}_{i,k}$ but we can even describe how multiplication
    by $\beta^\ell$ affects representations inside
    $\mathcal{R}_{i,k}$.  With notation of the previous proof, for any
    $n\in\mathcal R_{i,k}$ (and $k$ large enough), we have
$$
\mathrm{rep}_\ell(\beta^\ell n)=a_1^t\,
\rep{\{a_2,\ldots,a_\ell\}}(\mu_i(k)+\beta^\ell(n-m_{i,k}))
$$ 
with $t$ such that 
$|\rep{\ell}(\beta^\ell n)|=\beta\,k+\frac{(\beta-1)(\ell+1)}2-i$.
\end{remark}

\begin{example}\label{ex:par}
    Let $\ell=3$ and $\beta=5$. The number $171717$ (resp. $172739$)
    is the first element belonging to $\mathcal{R}_{4,100}$ (resp.
    $\mathcal{R}_{3,100}$). We have
    $$\rep{3}(171717)=a^{95}b^3c^2 \text{ and }\rep{3}(5^3\,
    171717)=a^{490}\underline{b^{14}c^0},$$
    $$\rep{3}(172739)=a^{55}b^{41}c^4 \text{ and }\rep{3}(5^3\,
    172739)=a^{493}\underline{b^{0}c^{12}}.$$
    Therefore $\mu_4(100)=\val{\{b,c\}}(b^{14})=105$ (resp.
    $\mu_3(100)=\val{\{b,c\}}(c^{12})=90$).  
    The number $333396$ (resp.  $334986$) is the smallest element in
    $\mathcal{R}_{4,125}$ (resp. $\mathcal{R}_{3,125}$),
    $$\rep{3}(333396)=a^{119}b^6c^0\text{ and }\rep{3}(5^3\,
    333396)=a^{615}\underline{b^{14}c^0},$$ 
    $$\rep{3}(334986)=a^{69}b^{41}c^{15}\text{ and }\rep{3}(5^3\,
    334986)=a^{618}\underline{b^{0}c^{12}}.$$
    We have $\#\mathcal{R}_{4,100}=1022$,
    $\#\mathcal{R}_{4,125}=1590$ and get the following table.
$$\begin{array}{|c||c|c|c|}
\hline
j & \Psi(\rep{3}(5^3(m_{4,100}+j)))&  \Psi(\rep{3}(5^3 
(m_{4,125}+j))) & \Psi(\rep{\{b,c\}}(\mu_4(100)+5^3 j))\\
\hline
0&(490,14,0) &    (615,14,0) & (14,0)\\
1&(484,0,20) &    (609,0,20)& (0,20) \\
2&(478,22,4) &    (603,22,4)& (22,4)\\
\vdots &\vdots &  \vdots &  \vdots\\
1021&(0,34,470) & (125,34,470)& (34,470) \\
1022& \times   & (124,415,90) & (415,90)\\
\vdots& \vdots & \vdots &  \vdots\\
1589 & \times  & (0,34,595) &(34,595)\\
\hline
\end{array}$$
\end{example}

 \vskip 30pt

\section*{\normalsize Acknowledgments} 
We thank P.~Lecomte for fruitful discussions during the elaboration of
this paper.

 \vskip 30pt

\end{document}